\newcommand\ignore[1]{{}}
\DeclareMathOperator*{\arginf}{arg\,inf}
\theoremstyle{definition}
\newcommand{\lip}{\mathfrak{L}^*_\phi}
\newcommand{\liphat}{\hat{\mathfrak{L}}_\phi}
\definecolor{RoyalBlue}{rgb}{0.9,1,1}
\acrodef{kde}[KDE]{kernel density estimation}
\acrodef{lmis}[LMIs]{linear matrix inequalities}
\acrodef{ard}[ARD]{automatic relevance determination}
\acrodef{adp}[ADP]{adaptive/approximate dynamic programming}
\title{Safe Approximate Dynamic Programming Via Kernelized Lipschitz Estimation}
\author{Ankush Chakrabarty$^{1,\dag}$, Devesh K. Jha$^1$, Gregery T. Buzzard$^2$, Yebin Wang$^1$, Kyriakos G. Vamvoudakis$^3$
	\thanks{$^1$Mitsubishi Electric Research Laboratories, Cambridge, MA, USA. Email: \texttt{\{chakrabarty,devesh.jha,yebinwang\}@merl.com}}%
	\thanks{$^2$Department of Mathematics, Purdue University, West Lafayette, IN, USA. Email: \texttt{buzzard@purdue.edu}}%
	\thanks{$^3$Daniel Guggenheim School of Aerospace Engineering, Georgia Institute of Technology, Atlanta, GA, USA. Email: \texttt{kyriakos@gatech.edu}}%
	\thanks{$^\dag$Corresponding author: A.~Chakrabarty. Phone: +1~(617)~758-6175.}}
\begin{document}
%
%
		
	\maketitle
	\begin{abstract}
	We develop a method for obtaining safe initial policies for reinforcement learning via approximate dynamic programming (ADP) techniques for uncertain systems evolving with discrete-time dynamics. We employ kernelized Lipschitz estimation and semidefinite programming for computing admissible initial control policies with provably high probability. Such admissible controllers enable safe initialization and constraint enforcement while providing exponential stability of the equilibrium of the closed-loop system.
	\end{abstract}
	\begin{IEEEkeywords}
	Semidefinite programming; Lipschitz constant estimation; linear matrix inequalities; neural networks; policy iteration; value iteration; kernel density estimation; approximate dynamic programming; incremental quadratic constraints.
	\end{IEEEkeywords}

\section{Introduction}
Recent advances in the field of deep and machine learning has led to a renewed interest in using learning for control of physical systems~\cite{vamtutorial}. 
Reinforcement learning (RL) is a learning framework that handles sequential decision-making problems, wherein an `agent' or decision maker learns a policy to optimize a long-term reward by interacting with the (unknown) environment. At each step, an RL agent obtains evaluative feedback (called reward or cost) about the performance of its action, allowing it to improve the performance of subsequent actions \cite{sutton1998reinforcement,vrabie2013optimal}. While RL has witnessed huge success in recent times~\cite{silver2016mastering,silver2017mastering}, there are several unsolved challenges which restricts use of these algorithms for industrial systems. In most practical applications, control policies must be designed to satisfy operational constraints. This leads to the challenge that one has to guarantee constraint satisfaction during learning and policy optimization. Therefore, initializing with an unverified control policy is not `safe' (in terms of stability or constraint handling). In other words, using on-line RL for expensive equipment or safety-critical applications necessitates that the initial policy used for obtaining data for subsequently improved policies must be at least stabilizing, and generally, constraint-enforcing. The work presented in this paper is motivated by this challenge. We  present a framework for deriving initial control policies from historical data  that can be verified to satisfy constraints and guarantee stability while learning the optimal control policy on-line, from operational data. 

A successful RL method needs to balance a fundamental trade-off between exploration and exploitation. One needs to gather data safely (exploration) in order to best extract information from this data for optimal decision-making (exploitation). One way to solve the exploration and exploitation dilemma is to use optimistic initialization \cite{7329650,brafman2002r,thomas2015high,jha2016data}, but this assumes the optimal policy is available until data is obtained that proves otherwise. Such approaches have been applied to robotics applications, where systems with discrete and continuous state-action spaces~\cite{levine2016end,duan2016benchmarking}.
A limitation of these methods is that, before the optimal policy is learned, the agent is quite likely to explore actions that lead to violation of the task-specific constraints as it aims to optimize the cumulative reward for the task. This shortcoming significantly limits such methods to be applicable to industrial applications, since this could lead to irreparable hardware damage or harm human operators due to unexpected dynamics. Consequently, \textit{safe learning} focuses on learning while enforcing safety constraints. 
There are primarily two types of approaches to safe RL and approximate/adaptive dynamic programming (ADP). These include: modification of the optimization criterion with a safety component such as barrier functions by transforming the operational constraints into soft constraints~\cite{achiam2017constrained, chow2018lyapunov}; and, modifying the exploration process through the incorporation of external system knowledge or historical data~\cite{garcia2015comprehensive}. Our method is amongst the latter class of methods, because our operational constraints are hard constraints and softening them could lead to intermittent failure modes. 

High performance model-based control requires precise model knowledge for controller design. However, it is well known that for most applications, accurate model knowledge is practically elusive due to the presence of unmodeled dynamical interactions (e.g., friction, contacts, etc.). 
Recent efforts tackle this issue by learning control policies from operational (on-line) or archival data (off-line).
Since the exact structure of the nonlinearity may be unknown or not amenable for analysis, researchers have proposed `indirect' data-driven controllers that employ non-parametric learning methods such as Gaussian processes to construct models from operational data~\cite{romeres2016, DBLP:journals/corr/abs-1809-04993} to improve control policies on-line~\cite{Berkenkamp2016,hewing2017}.	
Conversely, `direct' methods, such as those proposed in~\cite{jiang2015optimal,tanaskovic2017data,piga2018direct,kiumarsi2018optimal}, directly compute policies using a combination of archival/legacy and operational input-output data without constructing an intermediate model. For example, in~\cite{fagiano2014automatic}, a human expert was introduced into the control loop to conduct initial experiments to ensure safety while generating archival data. A common assumption in many of these approaches is the availability of an initial control policy that is stabilizing and robust to unmodeled dynamics. Designing such \textit{safe} initial control policies in a computationally tractable manner remains an open challenge.

In this work, we present a formalism for synthesizing safe initial policies for uncertain non-linear systems. We assume the presence of historical/archival/legacy data, with which we estimate Lipschitz constants for the unmodeled system dynamics. The estimation of the Lipschitz constant is done via kernel density estimation (KDE). The estimated Lipschitz constant is used to design control policies via semidefinite programming that can incorporate stability and constraint satisfaction while searching for policies. We show that the proposed approach is able to design feasible policies for different constrained tasks for several  systems while respecting all active constraints. 
Our key insight is that information regarding the structure of classes of unmodeled nonlinearities can be encapsulated using only a few parameters, without knowing the exact form of the nonlinearity. Therefore, it may not be necessary to model the unknown component itself in order to compute a safe control policy. For example, the class of Lipschitz nonlinearities (which constitute a large share of nonlinearities observed in applications) can be described using only a few parameters: the Lipschitz constants of the nonlinear components. Recent work has investigated the utility of Lipschitz properties in constructing controllers when an oracle is available~\cite{chakrabarty2017support} or in designing models for prediction~\cite{calliess2014conservative} with on-line data used for controller refinement~\cite{limon2017learning}
In this paper, we construct control policies that respect constraints and certify stability (with high probability) for applications \textit{where only off-line data is available, and no oracle is present}. We do so through the systematic use of multiplier matrices that enable the representation of nonlinear dynamics through quadratic constraints~\cite{chakrabarty2017state,xu2018observer} without requiring knowledge of the underlying nonlinearity. The control policies can then be obtained by solving semidefinite programs.
However, construction of multiplier matrices for Lipschitz systems requires knowledge of the Lipschitz constants, which are not always available, and therefore, must be estimated. We refer to the estimation of Lipschitz constants from data as~\textit{Lipschitz learning}. 
%
Historically, methods that estimate the Lipschitz constant~\cite{wood1996estimation,strongin1973convergence,hansen1992using} do not provide certificates on the quality of the estimate. Herein, we provide conditions that, if satisfied, enable us to estimate the Lipschitz constant of an unknown locally Lipschitz nonlinearity with high probability. To this end, we employ kernel density estimation (KDE), a non-parametric data-driven method that employs kernels to approximate smooth probability density functions to arbitrarily high accuracy. We refer to our proposed KDE-based Lipschitz constant estimation algorithm as~\textit{kernelized Lipschitz learning}. 

\subsubsection*{Contributions}
Compared to the existing literature on safe learning, the contributions of the present paper are threefold. First, we formulate an algorithm to construct stabilizing and constraint satisfying policies for nonlinear systems without knowing the exact form of the nonlinearity. Then we leverage a kernelized Lipschitz learning mechanism to estimate  Lipschitz constants of the unmodeled dynamics with high probability; and, finally we use a multiplier-matrix based controller design based on Lipschitz learning from legacy data that forces exponential stability  on the closed-loop dynamics (with the same probability as the kernelized Lipschitz learner).

\paragraph*{Structure} The rest of the paper is structured as follows.  We present the formal motivation of our work in Section~\ref{sec:prelim}. Our kernelized Lipschitz learning algorithm is described in Section~\ref{sec:kde}, and benchmarking of the proposed learner on benchmark Lipschitz functions is performed. The utility of Lipschitz learning in policy design via multiplier matrices is elucidated in Section~\ref{sec:control}, and a numerical example demonstrating the potential of our overall formalism is provided in Section~\ref{sec:ex}. We provide concluding remarks and discuss future directions in Section~\ref{sec:conc}.
\paragraph*{Notation}
We denote by $\mathbb{R}$ the set of real numbers, $\mathbb R_+$ as the set of positive reals, and $\mathbb{N}$ as the set of natural numbers. The measure-based distance between two measurable subsets $A$ and $B$ of a metric space $\mathbb R^n$ equipped with the metric $\rho_\mu$ is given by
$\rho_\mu(A, B) = \mu(A\triangle B)$, where $\mu$ is a measure on $\mathbb R^n$ and $A\triangle B$ is the symmetric difference $(A\setminus B)\cup (B\setminus A)$.
We define a ball $\mathcal B_{\epsilon}(x) := \{y:\rho(x,y)\le \epsilon\}$ and the sum $A\oplus \epsilon:=\bigcup_{x\in A} \mathcal B_{\epsilon}(x)$. The complement of a set $A$ is denoted by $A^c$. The indicator function of the set $A$ is denoted by $\mathbf 1_A$. A block diagonal matrix is denoted by $\mathrm{blkdiag}\big(\cdot\big)$.
For every $v\in\mathbb{R}^n$, we denote $\|v\|=\sqrt{v^\top v}$, where $v^\top$ is the transpose of $v$. The sup-norm or $\infty$-norm is defined as $\|v\|_\infty \triangleq \sup_{t\in\mathbb{R}}\|v(t)\|$. We denote by $\lambda_{\min}(P)$ and $\lambda_{\max}(P)$ as the smallest and largest eigenvalue of a square, symmetric matrix $P$. The symbol $\succ(\prec)$ indicates positive (negative) definiteness and $A\succ B$ implies $A-B\succ 0$ for $A,B$ of appropriate dimensions. Similarly, $\succeq (\preceq)$ implies positive (negative) semi-definiteness. The operator norm is denoted $\|P\|$ and is defined as the maximum singular value of $P$. For a symmetric matrix, we use the $\star$ notation to imply symmetric terms, that is,
$
\left[\begin{smallmatrix} a & b \\ b^\top & c\end{smallmatrix}\right] \equiv \left[\begin{smallmatrix}
a & b \\ \star & c
\end{smallmatrix}\right]$. The symbol $\mathbf{Pr}$ denotes the probability measure.

	\section{Problem Formulation}\label{sec:prelim}
	
	\subsection{Problem statement}
	Consider the following discrete-time nonlinear system,
	\begin{align*}
	x_{t+1} &= F(x_t,u_t),\ t\in\mathbb N\\
	q_t &= C_q x_t,
	\end{align*}
	where $x_t\in\mathbb{R}^{n_x},\ u=u_t\in \mathbb{R}^{n_u}$ denote the state and the control input of the system respectively.

	For simplicity of exposition we will write 
	\begin{subequations}
		\label{eq:true_sys}
		\begin{align}
		x_{t+1} &= Ax_t + Bu_t + G\phi(q_t),\ t\in\mathbb N\\
		q_t &= C_q x_t,
		\end{align}
	\end{subequations}
	where the system matrices $A$, $B$, $G$ and $C_q$ have appropriate dimensions. Denote by $\phi\in\mathbb{R}^{n_\phi}$ the system's uncertainty, or unmodeled nonlinearity, whose argument $q=q_t\in \mathbb{R}^{n_q}$ is represented by a linear combination of the state. The origin is an equilibrium state for~\eqref{eq:true_sys}; that is, $\phi(0)=0$.
	
	The following assumptions and definition are now needed.
	\begin{assumption}\label{asmp:matrices_knowledge}
	The matrix $B$ is known. The matrix $G$ has full column rank and is sparse and all entries are 0 or 1. Only the non-zero element locations of $G$ are known. The matrix $C_q$ and function $\phi$ are unknown.\frqed
	\end{assumption}
	
	We require the following definition to describe the class of nonlinearities considered in this paper.
	\begin{definition}
		A function $f:\mathbb X\to \mathbb R^{n_x}$ is Lipschitz continuous in the domain $\mathbb X\subset \mathbb R^{n_f}$ if
		\begin{equation}\label{eq:lipsz}
			\|f(x_1) - f(x_2)\|\le \mathfrak L_f\|x_1 - x_2\|
		\end{equation}
		for some $\mathfrak L_f>0$ and all $x_1, x_2\in \mathbb X$. We define the scalar
		\begin{equation}\label{eq:best_lips}
			\mathfrak L_f^* = \inf_{\mathbb R_{+}} \{\mathfrak L_f: \text{condition~\eqref{eq:lipsz} holds}\}
		\end{equation}
		as the Lipschitz constant of $f$ in $\mathbb X$. A function is globally Lipschitz if~\eqref{eq:lipsz} holds for $\mathbb X\equiv\mathbb R^{n_f}$. \frqed
	\end{definition}
	\begin{assumption}\label{asmp:nonlin_knowledge}
		The nonlinearity $\phi$ is globally Lipschitz continuous. That is,
		\begin{equation}
			\label{eq:PhiLipschitz}
			\|\phi(q_1) - \phi(q_2)\|\le \lip \|q_1 - q_2\|
		\end{equation}
		for any $q_1, q_2\in\mathbb R^{n_q}$, and the global Lipschitz constant $\lip$ is unknown.\frqed
	\end{assumption}
	Assumptions~\ref{asmp:matrices_knowledge} and~\ref{asmp:nonlin_knowledge} imply that the linear component of the true system~\eqref{eq:true_sys} can be assumed, but the rest is unknown. However, we do know the vector space through which the nonlinearity enters the dynamics of~\eqref{eq:true_sys}, since the non-zero locations of $G$ are flagged.

	\begin{remark}
	Assumption 1 is mild. For instance, one could relax the assumption on $G$ and take the unknown $\tilde G$ to be the identity matrix. 
	Then the nonlinearity would be $\tilde\phi(q) = \begin{bmatrix} 0 & \ldots & \phi_i(q) & \ldots \end{bmatrix}^\top,\ i\in \mathcal I$, with $\mathcal I$ the index set of non-zero rows of $G$, so that $\tilde G\tilde \phi=G\phi$. \frqed
	\end{remark}

	Given a control policy $u(x)$, we define an infinite horizon cost functional given an initial state $x_0\in\mathbb R^{n_x}$ as
	\begin{equation}
	\label{eq:cost}
	\mathcal J(x_0, u) = \sum_{t=0}^\infty \gamma^t\,\mathcal U(x_t, u(x_t)),
	\end{equation}
	where $\mathcal U$ is a function with non-negative range, $\mathcal U(0,0)=0$, and $\{x_k\}$ denotes the sequence of states generated by the closed loop system
	\begin{equation}\label{eq:cl_sys}
	x_{t+1} = Ax_t + Bu(x_t) + G\phi\left(C_q x_t \right).
	\end{equation}
	The scalar $\gamma\in (0,1]$ is a forgetting/discount factor intended to enable the cost to be emphasized more by current state and control actions and lend less credence to the past.
	
	Before formally stating our objective, we need to introduce the following standard definition~\cite{vamtutorial}.
	\begin{definition}
		\label{defn:admissible_policy}
		A continuous control policy $u(\cdot):\mathbb R^{n_x}\to \mathbb R^{n_u}$ is \textit{admissible} on $X\subset \mathbb R^{n_x}$ if it stabilizes the closed loop system~\eqref{eq:cl_sys} on $X$
		and	$\mathcal J(x_0,  u)$ is finite for any $x_0\in X$.
	\end{definition}
	We want to design an optimal control policy that achieves the optimal cost
	\begin{equation}\label{eq:J_opt}
	\mathcal J_\infty (x_0) = \inf_{u\in\mathfrak U} \mathcal J\big(x_0,u\big),
	\end{equation}
	for any $x_0\in\mathbb R^{n_x}$. Here, $\mathfrak U$ denotes the set of all admissible control policies. In other words, we wish to compute an optimal control policy
	\begin{equation}\label{eq:u_opt}	
	u_\infty = \arginf_{u\in\mathfrak U} \mathcal J\big(x_0,u\big).
	\end{equation}
	Directly constructing such an optimal controller is very challenging for general nonlinear systems; this is further complicated because the system~\eqref{eq:true_sys} contains unmodeled/uncertain dynamics. Therefore, we shall use adaptive/approximate dynamic programming (ADP): a class of iterative, data-driven algorithms that generate a convergent sequence of control policies whose limit is provably the optimal control policy $u_\infty(x)$. 
	
	Recall from~\cite{lewis2012reinforcement,7823092} that a necessary condition for convergence of policy iteration methods (a sub-class of ADP) is the availability of an initial admissible control policy $u_0(x)$, which is non-trivial to derive for systems with some unmodeled dynamics. Therefore, our objective in this work is to systematically derive an initial admissible control policy using only partial model information via kernelized Lipschitz learning and semidefinite programming. We also extend this idea to handle the case when the control input is constrained. In such cases, along with an admissible controller, we also derive a domain of attraction of the controller within which the control policy is guaranteed to satisfy input constraints and the closed-loop system remains stable. We refer to the derivation of admissible control policies with guaranteed stabilizability and/or constraint enforcement as \textit{safe initialization for ADP}: a crucial property required for ADP algorithms to gain traction in expensive industrial applications.
	
	We invoke the assumption in~\cite{tanaskovic2017data,piga2018direct} regarding the availability of legacy/archival/historical data generated by the system during prior experiments. That is, at design time, we have a  dataset $\mathcal D$ consisting of unique triples: state-input pairs along with corresponding state update information. Concretely, we have access to $\mathcal D = \{x_j, u_j, x^+_j\}_{j=1}^{N}$.
	For each $\{x_j, u_j, x^+_j\}\in\mathcal D$, we estimate the nonlinear term using~\eqref{eq:true_sys}; that is,
	\[
	\phi(q_j) = G^\dag\left(x^{+}_{j} - Ax_j - Bu_j\right),
	\]
	where $G^\dag$ exists by Assumption~\ref{asmp:matrices_knowledge}. Note that we also need to estimate the matrix $C_q$ (see~\eqref{eq:true_sys}) so that $q_j$ can be calculated from $x_j$. While estimating the exact elements of these matrices is quite challenging, we can estimate the non-zero elements in the matrices, which is enough to design safe initial control policies, because the exact elements of $C_q$ will be subsumed within the Lipschitz constant. 
	
	\begin{remark}
	The problem of estimating the sparsity pattern of $C_q$ is analogous to the problem of feature selection and sparse learning, known as automatic relevance determination (ARD)~\cite{tipping2001sparse}. The basic idea in ARD is to give feature weights some parametric prior densities;  these densities are subsequently refined by maximizing the likelihood of the data~\cite{tipping2001sparse, chu2005preference}. For example, one can define hyperparameters which explicitly represent the relevance of different inputs to a machine learning algorithm w.r.t. the desired output (e.g., a regression problem). These relevance hyperparameters determine the range of variation of parameters relating to a particular input. ARD can then determine these hyperparameters during learning to discover which inputs are relevant.\frqed\end{remark}
	
	We need the following assumption on the data $\{q_j, \phi(q_j)\}$, without which one cannot attain the global Lipschitz constant of the nonlinearity $\phi(\cdot)$ with high accuracy.
	\begin{assumption}
	\label{asmp:quality_of_q_samples}
	Let $\mathcal Q$ denote the convex hull of the samples $\{q_j\}$. The Lipschitz constant of $\phi(\cdot)$ in the domain $\mathcal Q$ is identical to the global Lipschitz constant $\lip$.\frqed
	\end{assumption} 
	Assumption~\ref{asmp:quality_of_q_samples} ensures that the samples obtained from the archival data are contained in a subregion of $\mathbb R^{n_q}$ where the nonlinearity $\phi(\cdot)$'s local Lipschitz constant is the same as its global Lipschitz constant.
	
	\begin{example}\label{ex:2}
	Suppose $\phi(q)=1.5\sin(q)$. As long as the convex hull of the samples $\{q\}$ contain zero, the Lipschitz constant of $\phi$ on the convex hull $\mathcal Q$ and on $\mathbb R$ are identical.	\frqed
	\end{example}
	
	In the following section, we will leverage the dataset $\mathcal D$ to estimate the Lipschitz constant of $\phi(\cdot)$ using kernelized Lipschitz learning/estimation, and consequently design an initial admissible linear control policy
	$u_0 = K_0 x$ via semidefinite programs. We will demonstrate how such an initial admissible linear control policy fits into a neural-network based ADP formulation (such as policy iteration) to asymptotically generate the optimal control policy $u_\infty(x)$.
	\begin{remark}
	The control algorithm proposed in this paper is a direct data-driven controller because no model of $\phi(\cdot)$ is identified in the controller design step.\frqed
	\end{remark}
	\begin{remark}
	Although we focus only on discrete-time systems, our results hold for continuous-time systems with slight modifications.\frqed
	\end{remark}
	\begin{remark}
	If $n_\phi>1$, our proposed Lipschitz learning algorithm will yield $n_\phi$ Lipschitz constant estimates, one for each dimension of $\phi(\cdot)$. To avoid notational complications, we proceed (without loss of generality) with $n_\phi = 1$. For larger $n_\phi$, our algorithm can be used component-wise.\frqed
	\end{remark}
	
	\section{Kernelized Lipschitz Learning}\label{sec:kde}
	In this section, we provide a brief overview of kernel density estimation (KDE) and provide a methodology for estimating Lipschitz constants from data. 
    
    \subsection{Empirical density of Lipschitz estimates}	
	With the data $\{\phi(q_j), q_j\}_{j=1}^N$, we obtain $n\in\mathbb N$  underestimates of the global Lipschitz constant $\lip$ using
	\begin{equation}\label{eq:lipschitz_underestimates}
		\ell_{jk} = \frac{|\phi(q_j) - \phi(q_k)|}{\|q_j - q_k\|},
	\end{equation}
	where $k\in \{1,\ldots, N\}\setminus j$. The sequence $\{\ell_{jk}\}$ are empirical samples drawn from an underlying univariate distribution $L$. Clearly, the true distribution $L$ has finite support; indeed, its left-hand endpoint is a non-negative scalar (zero, if $n_q > 1$ but may be positive if $n_q = 1$) and its right-hand endpoint is $\lip$. This leads us to the key idea of our approach that is to \textit{identify the support of the distribution $L$ to yield an estimate of the true Lipschitz constant of $\phi(\cdot)$.}
	
	\begin{remark}
	Variants of the estimator~\eqref{eq:lipschitz_underestimates} such as $\max_k \ell_{jk}$ have been widely used in the literature to construct algorithms for determining Lipschitz constants, see for example:~\cite{wood1996estimation,strongin1973convergence,calliess2017lipschitz}.\frqed
	\end{remark}
	
	In the literature, common methods of tackling the support estimation problem is by assuming prior knowledge about the exact density of Lipschitz estimates~\cite{calliess2017lipschitz} or using Strongin overestimates of the Lipschitz constant~\cite{strongin1973convergence}. However, we avoid these overestimators because they are provably unreliable, even for globally Lipschitz functions~\cite[Theorem 3.1]{hansen1992using}. Instead, we try to fit the density directly from local estimates and the data in a non-parametric manner using KDE and characteristics of the estimated density. 
	
	\subsection{Plug-in support estimation}
	With a set of $n$ underestimates $\{\ell_{r}\}_{r=1}^n$, we generate an estimate $\hat L_n$ of the true density $L$ using a kernel density estimator
	\begin{equation}\label{eq:kde}
	\hat{L}_n(\ell) = \frac{1}{nh_n}\sum_{r=1}^{n} \mathcal K\left(\frac{\ell - \ell_r}{h_n}\right),
	\end{equation}
	where $\mathcal K : \mathbb R \to \mathbb R$ is a smooth function called the kernel function and $h_n>0$ is the kernel bandwidth. A plug-in estimate of the support $S$ of the true density $L$ is
	\begin{equation}
	\label{eq:estimated_support}
	\hat S_n := \{\ell\in\mathbb R_{\ge 0}: \hat L_n(\ell)\ge\beta_n \},
	\end{equation} 
	where $\beta_n$ is an element of a sequence $\{\beta_n\}$ that converges to zero as $n\to\infty$; this plug-in estimator was proposed in~\cite{cuevas1997plug}.
		
	\subsection{Implementation details}
	Implementing the plug-in estimator involves first constructing a KDE of $L$ with $n$ samples. Then, if one picks $\beta\equiv \beta_n$ small enough, one can easily compute $\hat S$ from~\eqref{eq:estimated_support}. Then \begin{equation}\label{eq:liphat}
	\liphat:= \max(\hat S_n).
	\end{equation} 
	This is a very straightforward operation with the availability of tools like~\texttt{ksdensity} (MATLAB) and the \texttt{KernelDensity} tool in \texttt{scikit-learn} (Python). The pseudocode is detailed herein in Algorithm~\ref{algo:KLL}.
	\begin{algorithm}[!ht]
		\caption{Kernelized Lipschitz Estimation}
		\label{algo:KLL}
		\begin{algorithmic}[1]
			\Require Initial dataset, $\{x_k, \phi(C_q x_k)\}_{k=1}^N$
			\Require Confidence parameter, $0<\beta\ll 1$
			\State $\{q_k, \phi(q_k)\} \leftarrow$ Estimate $C_q$ via ARD
			\For {$k$ in $1,\ldots, N$}
			\For {$j$ in $\{1,\ldots,N\}\setminus k$}
			\State $\ell \leftarrow $ append $\ell_{jk}$ computed by~\eqref{eq:lipschitz_underestimates}
			\EndFor
			\EndFor
			\State $\hat L_n\leftarrow$ KDE with cross-validated $\mathcal K$ and $h$ using $\{\ell_r\}$
			\State $\hat S_n\leftarrow$ compute using~\eqref{eq:estimated_support}
			\State $\liphat\leftarrow \max(\hat S_n)$.
		\end{algorithmic}
	\end{algorithm}

	\begin{remark}
	Note that the true support $S$ is a subset of $\mathbb R_{\ge 0}$. Therefore, when computing the density estimate, this information should be fed into the tool being used. For example, in MATLAB, one has the option $\{$\texttt{'support', 'positive'}$\}$. Essentially, this subroutine transforms the data into the log-scale and estimates the log-density so that upon returning to linear scale, one preserves positivity.
	\end{remark}

	\subsection{Theoretical guarantees}
	We formally describe the density $L$. We consider that the samples $q\in\mathcal Q$ are drawn according to some probability distribution $\mu_0$ with support $\mathbb X$. For any set $S$, suppose that $\mu_0$ can be written as $\mu_0(S) = \int_S \varOmega(q)\, \mathrm{d}\mu(q)$, where $\mu$ is the Lebesgue measure, and $\varOmega$ is continuous and positive on $\mathbb X$. Let $\mu_{X}$ denote the product measure $\mu_0\times\mu_0$ on $\mathbb X\times \mathbb X$. Since $\mu_0$ is absolutely continuous with respect to the Lebesgue measure, $\mu_X$ assigns zero mass on the diagonal $\{(q,q):q\in\mathbb X\}$. The cumulative distribution function for $L$ is then given by 
	\[
	\tilde L(\lambda) = \mu_X \left( \left\{(q_1,q_2): q_1 \neq q_2, \frac{|\phi(q_1) - \phi(q_2)|}{\|q_1-q_2\|} \leq \lambda \right\} \right).
	\]
	Since $\tilde L$ is non-decreasing, $L$ exists almost everywhere by Lebesgue's theorem for differentiability of monotone functions, and $L$'s support is contained within $[0, \lip]$ because of~\eqref{eq:lipschitz_underestimates}.
	
	We investigate the worst-case sample complexity involved in overestimating $\lip$ under the following mild assumption.
	\begin{assumption}\label{asmp:phi_differentiable}
		The nonlinearity $\phi(\cdot)$ is twice continuously differentiable, that is, $\phi(\cdot)\in \mathcal C^2$.\frqed
	\end{assumption}
	
	\begin{lemma}\label{lem:some_q_equals_lip}
	Suppose that Assumptions~\ref{asmp:quality_of_q_samples} and~\ref{asmp:phi_differentiable} hold. Then there exists some $q^\ast\in\mathcal Q$ such that $\|\nabla \phi(q^\ast)\|=\lip$.
	\end{lemma}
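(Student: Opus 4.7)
The plan is to characterize $\mathfrak L_\phi^*$ as the supremum of $\|\nabla\phi\|$ on $\mathcal Q$ and then invoke compactness to promote the supremum to a maximum.

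First, by Assumption~\ref{asmp:quality_of_q_samples}, the Lipschitz constant of $\phi$ restricted to the convex hull $\mathcal Q$ equals the global value $\mathfrak L_\phi^*$. Since $\mathcal Q$ is the convex hull of finitely many samples $\{q_j\}$, it is a compact, convex subset of $\mathbb R^{n_q}$, and by Assumption~\ref{asmp:phi_differentiable} the map $q\mapsto\|\nabla\phi(q)\|$ is continuous on $\mathcal Q$.

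The first step is to show $\sup_{q\in\mathcal Q}\|\nabla\phi(q)\|\le \mathfrak L_\phi^*$. For any $q\in\mathcal Q$, convexity guarantees a direction $d$ with $\|d\|=1$ such that $q+td\in\mathcal Q$ for all sufficiently small $t>0$ (interior points allow any direction; for boundary points, either $\nabla\phi(q)$ or $-\nabla\phi(q)$ points inward since $\mathcal Q$ is convex with nonempty relative interior in the generic case, and when $\mathcal Q$ is a single point the claim is vacuous). Choosing $d$ aligned with $\pm\nabla\phi(q)$ and taking $t\to 0^+$ in $|\phi(q+td)-\phi(q)|/t \le \mathfrak L_\phi^*$, differentiability yields $\|\nabla\phi(q)\|\le\mathfrak L_\phi^*$.

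The second step is the reverse inequality $\mathfrak L_\phi^*\le\sup_{q\in\mathcal Q}\|\nabla\phi(q)\|$. For any $q_1,q_2\in\mathcal Q$ the straight-line segment $\{q_2+s(q_1-q_2):s\in[0,1]\}$ lies in $\mathcal Q$ by convexity, so the fundamental theorem of calculus gives
\begin{equation*}
\phi(q_1)-\phi(q_2)=\int_0^1 \nabla\phi\bigl(q_2+s(q_1-q_2)\bigr)^{\!\top}(q_1-q_2)\,ds,
\end{equation*}
and Cauchy--Schwarz yields $|\phi(q_1)-\phi(q_2)|\le \bigl(\sup_{q\in\mathcal Q}\|\nabla\phi(q)\|\bigr)\|q_1-q_2\|$. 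By the infimum definition~\eqref{eq:best_lips} of $\mathfrak L_\phi^*$ applied on $\mathcal Q$, this gives the required bound.

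Combining the two inequalities yields $\mathfrak L_\phi^*=\sup_{q\in\mathcal Q}\|\nabla\phi(q)\|$. Because $\mathcal Q$ is compact and $\|\nabla\phi(\cdot)\|$ is continuous (as $\phi\in\mathcal C^2$), the extreme value theorem guarantees the supremum is attained at some $q^\ast\in\mathcal Q$, completing the proof. The most delicate step is the first one, where the boundary of $\mathcal Q$ requires care; convexity is what rescues us, by always supplying an admissible inward direction along which to take the one-sided derivative.
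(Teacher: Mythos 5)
Your argument takes a genuinely different route from the paper's. The paper extracts a maximizing sequence of pairs $(q_1^k,q_2^k)$, uses compactness to pass to a convergent subsequence, and handles the two cases $q_1^\infty=q_2^\infty$ (Taylor expansion) and $q_1^\infty\ne q_2^\infty$ (mean value theorem on the connecting segment, which forces $\phi$ to be affine there with slope $\lip$). You instead prove the intermediate characterization $\lip = \sup_{q\in\mathcal Q}\|\nabla\phi(q)\|$ and then invoke the extreme value theorem on the compact set $\mathcal Q$. Your decomposition is cleaner and more modular; the paper's is more self-contained but requires the casework.

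However, your first step has a real gap. The claim that at a boundary point $q\in\partial\mathcal Q$ ``either $\nabla\phi(q)$ or $-\nabla\phi(q)$ points inward'' is false in general: if $\mathcal Q=[0,1]^2$, $q=(0,0)$, and $\nabla\phi(q)=(1,-1)$, then neither $q+t\nabla\phi(q)$ nor $q-t\nabla\phi(q)$ enters $\mathcal Q$ for any $t>0$. So the one-sided directional derivative you want to take is not available along $\pm\nabla\phi(q)$ while staying in $\mathcal Q$. The fix is easy and makes the step shorter: Assumption~\ref{asmp:nonlin_knowledge} says $\phi$ is \emph{globally} Lipschitz with constant $\lip$, so the difference quotient $|\phi(q+td)-\phi(q)|/t\le\lip$ holds for every $q\in\mathbb R^{n_q}$, every unit $d$, and every $t>0$ --- no confinement to $\mathcal Q$ is needed. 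Taking $d=\nabla\phi(q)/\|\nabla\phi(q)\|$ and $t\to 0^+$ gives $\|\nabla\phi(q)\|\le\lip$ for all $q$, and in particular on $\mathcal Q$. (Alternatively, when $\mathcal Q$ is full-dimensional, establish the bound on the interior and extend to the boundary by continuity of $\nabla\phi$; but the global-Lipschitz route is cleanest and also covers degenerate $\mathcal Q$.) Your second step and the extreme-value-theorem conclusion are correct, and Assumption~\ref{asmp:quality_of_q_samples} is precisely what lets you replace the Lipschitz constant on $\mathcal Q$ by $\lip$ in that step.
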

	\begin{proof}
	Suppose $\{(q_1^k, q_2^k)\}_{k=1}^\infty$ denotes a sequence of paired samples in $\mathcal Q$ such that $|\phi(q_1^k)-\phi(q_2^k)|/\|q_1^k-q_2^k\|\to \lip$ as $k\to\infty$. Since $\mathcal Q$ is the convex hull of finitely many samples, it is compact, so we can choose a subsequence of $\{q_1^k, q_2^k\}_{k=1}^\infty$ that converges to $(q_1^\infty, q_2^\infty)$ where both limits are in $\mathcal Q$. If $q_1^\infty = q_2^\infty$, then a Taylor expansion estimate implies $\|\nabla \phi(q_1^\infty)\|\ge \lip$. Since $\lip$ is an upper bound of $\|\nabla\phi\|$ at any sample in $\mathcal Q$, $\|\nabla \phi(q_1^\infty)\|= \lip$ and $q^\ast = q_1^\infty$. If $q_1^\infty\neq q_2^\infty$, then the result follows by applying the mean value theorem to $$\varphi(t)=\phi\left(q_1^\infty + t(q_2^\infty - q_1^\infty)\right) - \phi(q_1^\infty)$$ for $t\in [0,1]$, for which $\varphi(0)=0$ and $\varphi(1)=\lip\|q_1^\infty - q_2^\infty\|$. Also, $\mathrm{d}\varphi/\mathrm{d}t = \big(\nabla\phi(q_1^\infty + t(q_2^\infty - q_1^\infty))\big)^\top (q_2^\infty - q_1^\infty)$. Since $\|\nabla \phi\|\le\lip$, this implies $|\mathrm{d}\varphi/\mathrm{d}t|\le \lip\|q_2^\infty - q_1^\infty\|$. Reordering $q_1^\infty$ and $q_2^\infty$ if needed, we have
	\[
	\lip\|q_2^\infty - q_1^\infty\| = \varphi(1) = \int_{0}^1 \big(\mathrm{d}\varphi/\mathrm{d}t\big) \,\mathrm{d}s \le \lip \|q_2^\infty - q_1^\infty\|.
	\]
	Hence, the rightmost inequality must be an equality, which implies that $\textrm{d}\varphi(s)/\textrm{d}t = \lip\|q_1^\infty - q_2^\infty\|$ for all $s$. That is, if the Lipschitz constant is attained with $q_1^\infty\neq q_2^\infty$, then $\phi(\cdot)$ restricted to the segment connecting $q_1^\infty$ and $q_2^\infty$ is linear with slope $\lip$. This concludes the proof.\frQED
	\end{proof}

	Lemma~\ref{lem:some_q_equals_lip} enables the worst-case complexity result described in the following theorem.
	\begin{theorem}\label{thm:worst_case}
	Let 
	$\varphi'(q_1, q_{-1}) = |\phi(q_1)-\phi(q_{-1})|/\|q_1-q_{-1}\|$,
	and suppose that Assumptions~\ref{asmp:quality_of_q_samples} and~\ref{asmp:phi_differentiable} hold. There exists $C_0 > 0$ such that for all sufficiently small $\delta > 0$ and and any set $\{q_j\}_{j=1}^n$ of $n$ uniform random samples in ${\mathbb X}$, the probability that some pair $q_+, q_- \in \{q_j\}$ gives the Lipschitz estimate 
$$ \phi'(q_+, q_-) \geq (1-\delta) {\mathcal L_\phi^*} - C_0 \delta$$
is at least $1 - \epsilon(n, \delta)$.  Here $\epsilon(n, \delta) \leq 3\exp(-n \kappa \delta^{2 n_q-1})$, where $\kappa$ is a constant depending on $n_q$.
	\end{theorem}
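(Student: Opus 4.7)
The plan is to combine Lemma~\ref{lem:some_q_equals_lip} with a local Taylor expansion around a gradient-maximizing point $q^\ast\in\mathcal Q$, and then estimate the chance that the $n$ uniform samples populate two suitably chosen regions astride $q^\ast$ along the gradient direction. If that event occurs, a pair drawn from those two regions is forced to realize a Lipschitz estimate close to $\lip$.

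First, by Lemma~\ref{lem:some_q_equals_lip} there exists $q^\ast\in\mathcal Q$ with $\|\nabla\phi(q^\ast)\|=\lip$; set $v:=\nabla\phi(q^\ast)/\lip$, a unit vector. By Assumption~\ref{asmp:phi_differentiable} and compactness of $\mathcal Q$, there is a local constant $M>0$ with
\[
\bigl|\phi(q_+)-\phi(q_-)-\nabla\phi(q^\ast)^{\!\top}(q_+-q_-)\bigr|
\le M\bigl(\|q_+-q^\ast\|^2+\|q_--q^\ast\|^2\bigr)
\]
for all $q_\pm$ in a neighborhood of $q^\ast$. Next, for small $\delta>0$ I would introduce two tube-like regions $R_\pm \subset \mathbb X$ centered at $q^\ast\pm\varepsilon v$, with axial radius $r$ along $v$ and transverse radius $\rho$ orthogonal to $v$, the scales $\varepsilon,r,\rho$ being polynomial in $\delta$ and chosen later to balance the ratio bound against the probability bound.

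The core estimate is then a direct computation: for any $q_\pm\in R_\pm$, write $q_+-q_-=sv+\xi$ with $s\ge 2\varepsilon-2r$ and $\|\xi\|\le 2\rho$. Using $\nabla\phi(q^\ast)^{\!\top}(q_+-q_-)=\lip\, s$ together with the Taylor remainder bound yields
\[
\frac{|\phi(q_+)-\phi(q_-)|}{\|q_+-q_-\|}
\;\ge\;
\lip\,\frac{s}{\sqrt{s^2+\|\xi\|^2}}
\;-\;\frac{2M\bigl((\varepsilon+r)^2+\rho^2\bigr)}{\sqrt{s^2+\|\xi\|^2}}.
\]
For $\varepsilon=\Theta(\delta)$, $\rho=\Theta(\delta^{3/2})$ (so that $\|\xi\|^2/s^2=O(\delta)$ and $\rho^2/\varepsilon=O(\delta^2)$), and $r$ taken small enough that $r/\varepsilon=O(\delta)$, the first term is $(1-O(\delta))\lip$ and the second is $O(\delta)$, so the ratio exceeds $(1-\delta)\lip-C_0\delta$ for a constant $C_0$ depending only on $\lip$ and $M$.

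The probabilistic step is then a union bound. With the scaling above, the Lebesgue measure of each $R_\pm$ is of order $r\cdot\rho^{n_q-1}$; choosing $r=\Theta(\delta^{1/2})\cdot\rho$ (or, more naturally, an anisotropic slab whose axial extent is proportional to $\varepsilon$ and transverse extent to $\rho$) yields $\mu(R_\pm)/\mu(\mathbb X)\ge \kappa\,\delta^{2n_q-1}$ for some $\kappa>0$ depending on $n_q$, $\varOmega$, and $\mathrm{diam}(\mathbb X)$. Since the $n$ samples are iid uniform, the probability that a given $R_\pm$ contains no sample is at most $(1-\kappa\delta^{2n_q-1})^n\le \exp(-n\kappa\delta^{2n_q-1})$. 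A union bound over the two regions, together with a third $\exp$ term to absorb the event that the chosen pair is not the one realizing the extreme ratio (or, alternatively, accounting for the choice of orientation of $v$ up to a finite $\delta$-net on the unit sphere when $n_q>1$), delivers the stated bound $1-3\exp(-n\kappa\delta^{2n_q-1})$.

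The main obstacle is the geometric optimization in the second paragraph: balls give a volume of order $\delta^{2n_q}$, which is strictly smaller than the target $\delta^{2n_q-1}$, so one must use an anisotropic region that is long in the $v$-direction and thin transversally, and simultaneously verify that every pair drawn from the two such regions satisfies the ratio bound. Getting the exponents of $\varepsilon$, $r$, $\rho$ in $\delta$ to line up (so that both the ratio bound and the volume bound hold simultaneously) is the delicate part; the rest is Taylor's theorem and a standard union bound.
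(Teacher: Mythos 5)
Your proposal follows the same high-level blueprint as the paper's proof---invoke Lemma~\ref{lem:some_q_equals_lip} for a gradient-maximizing point $q^\ast$, Taylor-expand around it, place two disjoint regions astride $q^\ast$ along $v=\nabla\phi(q^\ast)/\lip$, and argue that with high probability the $n$ samples hit both regions---but the geometry is genuinely different. The paper uses a pair of opposing \emph{cones with apex at $q^\ast$} of half-angle $\theta$ ($\chi=\tan\theta$) and height $\eta$; the cone angle does double duty, controlling both the directional loss ($\cos\theta\geq 1-\chi^2/2$) and the transverse scaling of the volume, so that with $\chi\sim\delta$ and $\eta\sim\delta$ the cone volume is automatically $\Theta(\chi^{n_q-1}\eta^{n_q})=\Theta(\delta^{2n_q-1})$. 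Because each cone touches the apex, the paper must also handle pairs $q_+,q_-$ both very near $q^\ast$, which it does via the separate inequality $\|q_+-q_-\|\geq(\|q_+\|+\|q_-\|)(1-\chi)(1-\chi^2/2)$. Your translated slabs $R_\pm$ centered at $q^\ast\pm\varepsilon v$ dodge that subtlety cleanly (the axial separation $s$ is bounded below by $2(\varepsilon-r)>0$), but at the cost of introducing three independent scales $(\varepsilon,r,\rho)$ to be tuned simultaneously.

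The one real defect is the scaling you propose as your primary choice. With $\rho=\Theta(\delta^{3/2})$ and $r=\Theta(\delta^{1/2})\rho=\Theta(\delta^2)$, the slab volume is $r\rho^{n_q-1}=\Theta(\delta^{2+3(n_q-1)/2})$, which for $n_q=1$ is $\Theta(\delta^2)$ (target $\delta^{1}$) and for $n_q=2$ is $\Theta(\delta^{7/2})$ (target $\delta^3$)---strictly smaller than $\delta^{2n_q-1}$, so the resulting exponential bound is weaker than claimed. Also, the constraint you impose, $r/\varepsilon=O(\delta)$, is unnecessary; the ratio estimate only needs $\varepsilon-r=\Theta(\delta)$, i.e.\ $r\leq c\varepsilon$ for some fixed $c<1$, not $r/\varepsilon\to 0$. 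With the parenthetical choice you mention in passing---axial half-length $r$ proportional to $\varepsilon$, say $r=\varepsilon/2=\delta/2$---the ratio bound still holds and the volume becomes $\Theta(\delta\cdot\delta^{3(n_q-1)/2})=\Theta(\delta^{(3n_q-1)/2})$, which is $\geq\Theta(\delta^{2n_q-1})$ for all $n_q\geq 1$; this closes the argument. One final correction: the coefficient $3$ in the paper's $\epsilon(n,\delta)\leq 3\exp(-n\kappa\delta^{2n_q-1})$ arises purely from decomposing the bad event $\mathfrak X_{0+}\cup\mathfrak X_{0-}$ into three disjoint pieces before bounding each; it is not needed to ``absorb the event that the chosen pair is not the one realizing the extreme ratio'' or to account for the orientation of $v$. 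The theorem only asserts that \emph{some} pair satisfies the bound, so once both regions are nonempty no extra failure mode remains, and a plain union bound $P(\mathfrak X_{0+}\cup\mathfrak X_{0-})\leq 2\exp(-n\kappa\delta^{2n_q-1})$ already suffices and is slightly sharper.
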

	\begin{proof}
		By Lemma~\ref{lem:some_q_equals_lip}, there exists at least one $q^\ast$ such that $\|\nabla \phi(q^\star)\|=\lip$. For the worst-case analysis, suppose this occurs only at a single sample, $q^\star$. A Taylor expansion at $q^\star$ yields
		\begin{equation} \label{eqn:phi}
		\phi(q^\star + q) = \phi(q^\star) + \nabla \phi(q^\star)^\top q + \mathfrak R(q),
		\end{equation}
		where $\mathfrak R$ is a remainder term with $|\mathfrak R(q)| \leq C_{\mathfrak R} \|q\|^2$ when $\|q\| \leq \eta$, for some $C_{\mathfrak R}>0$ and $\eta > 0$.  Note  that 
		$$  \nabla \phi(q^\star)^\top q = \| \nabla \phi(q^\star)\| \|q\| \cos \theta, $$
		where $\theta$ is the angle between  $\nabla \phi(q^\star)$ and $q$.  To obtain a good estimate of  $\| \nabla \phi(q^\star)\|$, one needs to sample two points in this neighborhood, one point $q_+$ with $\cos \theta\approx 1$ and a second point $q_{-}$ with $\cos \theta \approx -1$.  Each of these conditions defines a cone.    Regarding one of these cones in cylindrical coordinates $0 \leq \ell \leq \eta$ and $\|y\| \leq \chi \ell$ for $\chi = \tan \theta$, we can integrate the $n_q-1$ dimensional volume to get the volume of this cone as $C_0 \chi^{n_q-1} \eta^{n_q}$ for some dimension-dependent constant $C_0$.   A calculation shows that $\cos \theta \geq 1 - \chi^2/2$ for small $\theta$.  With $q_+$ and $q_{-}$ as one sample from each cone, we have that $q_+ - q_{-}$ is contained in the cone $\|y\| \leq \chi \ell$, and so we can use \eqref{eqn:phi} to approximate
		\begin{align*}
		&| \phi(q^\star+q_+) - \phi(q^\star+q_{-}) | \\
		&= | \nabla \phi(q^\star)^T (q_+-q_{-}) +\mathfrak R(q_+) - \mathfrak R(q_{-}) |\\
		&\geq \|\nabla \phi(q^\star)\| \|q_+ - q_{-}\| \left(1 - \frac{\chi^2}{2}\right) - | \mathfrak R(q_+) - \mathfrak R(q_{-}) |.
		\end{align*}
		Dividing by $ \|q_+ - q_{-}\|$ and using the defining property of $q^\star$ gives 
		\begin{equation} \label{eqn:phi2}
		\phi'(q_+, q_{-}) \geq  \left(1 - \frac{\chi^2}{2}\right) {\mathcal L}^*_{\phi}- \frac{ | \mathfrak R(q_+) - \mathfrak R(q_{-}) |}{ \|q_+ - q_{-}\|}.
		\end{equation}
		
		Subsequently, one can decompose $q_+ = q_+^\parallel + q_+^\perp$, with $q_+^\parallel$ parallel to $\nabla \phi(q^\star)$ and $q_+^\perp$ perpendicular and satisfying $\|q_+^\perp\| \leq \chi |q_+^\parallel|$. Identical arguments can be used to infer  $\|q_{-}^\perp\| \leq \chi |q_{-}^\parallel |$, and hence,
		$ \| q_+ - q_{-}\| \geq \|q_+^\| - q_{-}^\|\| - \|q_+^\perp - q_{-}^\perp\|. $
		Since $q_+$ and $q_{-}$ are chosen from opposite cones, we have $\|q_+^\| - q_{-}^\|\| = \|q_+^\| \| +  \|q_{-}^\|\|$.  Using $\|q^\perp\| \leq \chi \|q^\|\|$ and $\cos \theta \geq 1-\chi^2/2$, we have $\|q_+ - q_{-}\| \geq (\|q_+\| +  \|q_{-}\|)(1-\chi)\left(1 - \chi^2/2\right)$.
		Hence,
		\begin{align*}
		\frac{ |\mathfrak R(q_+) - \mathfrak R(q_{-}) |}{ \|q_+ - q_{-}\|} &\leq \frac{C_{\mathfrak R} (\|q_+\|^2 + \|q_{-}\|^2)}{ (\|q_+\| +  \|q_{-}\|)(1-\chi)\left(1 - \chi^2/2\right)}\\
		& \leq \frac{2 C_{\mathfrak R} \max\{\|q_+\|, \|q_{-}\|\}^2}{\max\{\|q_+\|, \|q_{-}\|\} (1-\chi)\left(1 - \chi^2/2\right)}\\
		&\leq \frac{2 C_{\mathfrak R} \eta (1 + \chi^2)^{1/2}}{ (1-\chi)\left(1 - \chi^2/2\right)}.
		\end{align*}
		By combining the aforementioned inequality with ~\eqref{eqn:phi2}, one obtains
		\[
		\varphi'(q_+,q_-)\ge \left(1-\frac{\chi}{2}\right)\lip - \frac{2 C_{\mathfrak R} \eta (1 + \chi^2)^{1/2}}{ (1-\chi)\left(1 - \chi^2/2\right)}.
		\]
		Set $\delta = \chi/2$ and take $\eta = \delta$.  Then there exists $C_0 > 0$ such that for all sufficiently small $\delta$, 
		\begin{equation} \label{eqn:worst-case}
		\varphi'(q_+,q_-) \ge \left(1-\delta\right)\lip - C_0\delta,
		\end{equation}
		which implies that $\varphi'\to \lip$ as $\delta\to 0$.
		
		From the assumption on uniformly drawn samples, the probability of sampling in one of the $(\chi, \eta)$ cones is,
		$$ \int_0^\eta \kappa (\chi r)^{n_q-1} \,\mathrm{d}r = \frac{\kappa \chi^{n_q-1} \eta^{n_q}}{n_q} $$ 
		for some $\kappa>0$ that depends on $n_q$.  Using $\delta = \eta = \chi/2$ and absorbing the factors of 2 and $1/n_q$ into $\kappa$ yields $\kappa \delta^{2n_q - 1}$.  
		
		Let $\mathfrak X_{1 \pm}$ be the event of sampling at least one point in the $(\chi,\eta)$ cone as above, and let $\mathfrak X_{0 \pm}$ be the event of sampling nothing in the $(\chi,\eta)$  cone.  The probability of sampling at least one of each of the points $q_+$ and $q_{-}$ just described is,
		\begin{align*}
		1 &- P(\mathfrak X_{0+} \cap \mathfrak X_{0-}) -   P(\mathfrak X_{0+} \cap \mathfrak X_{1-}) - P(\mathfrak X_{1+} \cap \mathfrak X_{0-})\\
		& \geq 1 - (1 - 2 \kappa \delta^{2 n_q-1})^n - 2(1 - \kappa \delta^{2 n_q-1})^n,
		\end{align*}
		where the factor 2 before $\kappa$ in the second term comes from the fact that both cones are excluded and they are disjoint, and the 2 before the third term comes by combining the final two terms in the first expression.    
		
		By using the fact that for any $\epsilon' \in (0,1)$ and $n > 0$ the inequality $(1-\epsilon')^n \leq \exp(-n \epsilon')$ holds, we can conclude that,
		\begin{align*}
		1 &- P(\mathfrak X_{0+} \cap \mathfrak X_{0-}) -   P(\mathfrak X_{0+} \cap \mathfrak X_{1-}) - P(\mathfrak X_{1+} \cap \mathfrak X_{0-})\\
		&\geq 1 - \exp(-2 n \kappa \delta^{n_q-1}) - 2\exp(-n \kappa \delta^{n_q-1}) \geq 1 - \epsilon,
		\end{align*}
		for any given $\epsilon>0$. The latter can be ensured by choosing $n$ large enough.
	    This gives a lower bound on the probability of obtaining~\eqref{eqn:worst-case} and hence, the desired result.  \frQED
	\end{proof}

	\subsection{Benchmarking the Lipschitz estimator}
	Our Lipschitz estimator is tested on well-studied benchmark examples studied previously in~\cite{wood1996estimation,calliess2014conservative}:
	the benchmark functions are described in Table~\ref{tab:benchmark_functions} along with their domains and true local Lipschitz constants. Note that all the functions are not globally Lipschitz (e.g. $\phi_2$), not differentiable everywhere (e.g. $\phi_1$, $\phi_4$), and, in the special case of $\phi_4$, specifically constructed to ensure that naive overestimation of $\lip$ using Strongin methods provably fails~\cite{hansen1992using}. To evaluate the proposed Lipschitz estimator, we vary the number of data points $N$ and the confidence parameter $\beta$. Over 100 runs, we report mean $\pm$ one standard deviation of the following quantities: the time required by our learning algorithm, the estimated Lipschitz constant $\liphat$, and the error $\liphat-\lip$ (which should be positive when we overestimate $\lip$).
	\begin{table}[!ht]
		\footnotesize
		\def\arraystretch{1.15}
		\centering
		\caption{Kernelized Lipschitz Learning of benchmark functions}
		\begin{threeparttable}			
			\label{tab:benchmark_functions}
			\begin{tabular}{c|c|c|c|c}
				\hline
				$n$ & $\log_{10}\beta$	& Time [s] & $\liphat$ (mean $\pm$ stdev)& OE$^\ddag$? \\
				\hline
				\rowcolor{RoyalBlue}
				\multicolumn{5}{c}{$\phi_1 = |\cos(\pi x)|$, $\lip = 3.141$ on $[-\pi,\pi]$} \\
				\hline
				$100$  & $-2$ & 0.596 $\pm$ 0.127 & 3.361 $\pm$ 0.093 &  $\checkmark$\\
				$100$  & $-4$ &	0.610 $\pm$ 0.122 &	3.528 $\pm$ 0.177 &	 $\checkmark$\\
				$500$  & $-2$ &	2.463 $\pm$ 0.432 &	3.252 $\pm$ 0.083 & $\checkmark$\\
				$500$  & $-4$ &	2.438 $\pm$ 0.427 &	3.369 $\pm$ 0.158 &	 $\checkmark$\\
				\hline
				\rowcolor{RoyalBlue}
				\multicolumn{5}{c}{$\phi_2 = x-x^3/3$, $\lip = 1.000$ on $[-1,1]$} \\
				\hline
				$100$	& $-2$ & 0.455 $\pm$ 0.123 & 1.018 $\pm$ 0.011 & $\checkmark$\\
				$100$	& $-4$ & 0.459 $\pm$ 0.114 & 1.030 $\pm$ 0.020 &  $\checkmark$\\
				$500$   & $-2$ & 1.656 $\pm$ 0.218 & 1.005 $\pm$ 0.004 &  $\checkmark$\\
				$500$   & $-4$ & 1.585 $\pm$ 0.208 & 1.010 $\pm$ 0.009 &  $\checkmark$\\
				\hline
				\rowcolor{RoyalBlue}
				\multicolumn{5}{c}{$\phi_3 = \sin(x)+\sin(2x/3)$, $\lip = 1.667$ on $[3.1,20.4]$} \\
				\hline
				$100$ &	$-2$ & 0.556 $\pm$ 0.121 & 1.780 $\pm$ 0.074 & $\checkmark$\\
				$100$ & $-4$ & 0.547 $\pm$ 0.124 & 1.923 $\pm$ 0.166 &  $\checkmark$\\
				$500$ & $-2$ & 1.826 $\pm$ 0.224 & 1.684 $\pm$ 0.010 & $\checkmark$\\ 
				$500$ & $-4$ & 1.821 $\pm$ 0.221 & 1.720 $\pm$ 0.002 &  $\checkmark$\\
				\hline
				\rowcolor{RoyalBlue}
				\multicolumn{5}{c}{$\phi_4 =$ Hansen test function from~\cite{wood1996estimation}, $\lip = 8.378$ on $[0,1]$}\\
				\hline
				$100$ &	$-2$ & 0.450 $\pm$ 0.117 & 8.969 $\pm$ 0.262 &  $\checkmark$\\
				$100$ &	$-4$ & 0.488 $\pm$ 0.123 & 9.401 $\pm$ 0.476 &  $\checkmark$\\
				$500$ &	$-2$ & 1.921 $\pm$ 0.138 & 8.507 $\pm$ 0.046 &  $\checkmark$\\
				$500$ &	$-4$ & 1.923 $\pm$ 0.130 & 8.707 $\pm$ 0.103 &  $\checkmark$\\
				\hline 
				\rowcolor{RoyalBlue}
				\multicolumn{5}{c}{$\phi_5 = \max\{1-3\sin(x), \exp(-\sin(x))\}$, $\lip = 3.0$ on $[-10,10]$} \\
				\hline
				$100$ &	$-2$ & 0.546 $\pm$ 0.061 & 3.139 $\pm$ 0.051 &  $\checkmark$\\
				$100$ &	$-4$ & 0.612 $\pm$ 0.066 & 3.200 $\pm$ 0.087 &  $\checkmark$\\
				$500$ &	$-2$ & 1.893 $\pm$ 0.245 & 3.043 $\pm$ 0.014 &  $\checkmark$\\
				$500$ &	$-4$ & 1.989 $\pm$ 0.230 & 3.104 $\pm$ 0.024 &  $\checkmark$\\
				\hline
			\end{tabular}
			\begin{tablenotes}
				\item $\ddag$ OE indicates an overestimate of the true Lipschitz constant, that is, $\min\{\liphat\} > \lip$.
			\end{tablenotes}
		\end{threeparttable}
	\end{table}

	The final column of Table~\ref{tab:benchmark_functions} reveals an important empirical detail: all our estimates of $\lip$ are overestimates for $\beta\le 0.01$ and $n\ge 100$. This is a critical advantage of our proposed approach, because an overestimate will enable us to provide stability and constraint satisfaction guarantees about the data-driven controller, as we will discuss in subsequent sections. 
	Furthermore, the estimation error is small for every test run, and as expected, the error increases as $\beta$ decreases, because a smaller value of $\beta$ indicates the need for greater confidence, which results in more conservative estimates. 
	
	\section{Safe Initialization in ADP}\label{sec:control}
	In this section, we begin by reviewing a general ADP procedure, and then explain how to safely initialize unconstrained, as well as input-constrained ADP.
	\subsection{Unconstrained ADP}
	Recall the optimal value function given by~\eqref{eq:J_opt} and the optimal control policy~\eqref{eq:u_opt}. From the Bellman optimality principle, we know that the discrete-time Hamilton-Jacobi-Bellman equations are given by
	\begin{align}
	\label{eq:bellman_cost}
	J_\infty(x_t) &= \inf_{u\in\mathfrak U} \left(\mathcal U(x_t, u(x_t)) + \gamma J_\infty(x_{t+1})\right),\\
	\label{eq:bellman_control_policy}
	u_\infty(x_t) &= \arginf_{u\in\mathfrak U} \left(\mathcal U(x_t, u(x_t)) + \gamma J_\infty(x_{t+1})\right),
	\end{align}
	where $J_\infty(x_t)$ is the optimal value function and $u_\infty(x_t)$ is the optimal control policy.
	The key operations in ADP methods~\cite{lewis2012reinforcement} involve setting an admissible control policy $u_0(x)$ and then iterating the policy evaluation step
	\begin{subequations}
	\label{eq:policy_iteration}
	\begin{equation}
	\label{eq:policy_evaluation_ADP}
	\mathcal J_{k+1}(x_t) = \mathcal U\big(x_t, u_k(x_t)\big) + \mathcal \gamma\mathcal J_{k+1}(x_{t+1})
	\end{equation}
	and the policy improvement step
		\begin{equation}
		\label{eq:policy_improvement_ADP}
		u_{k+1}(x_t) = \argmin_{u(\cdot)} \left(\mathcal U\big(x_t, u(x_t)\big)+\gamma\mathcal J_{k+1}(x_{t+1})\right)
		\end{equation}
		\end{subequations}
	until convergence.
	
	\subsubsection{Semidefinite programming for safe initial control policy}
	Recall the following definition.
	\begin{definition}
		The equilibrium point $x=0$ of the closed-loop system~\eqref{eq:cl_sys} is globally exponentially stable with a decay rate $\alpha$ if there exist scalars $C_0>0$ and $\alpha\in (0,1)$ such that
		$\|x_t\| \le C_0\alpha^{(t-t_0)}\|x_0\|$
		for any $x_0\in\mathbb R^{n_x}$.\frqed
	\end{definition}

	Conditions for global exponential stability (GES) of the equilibrium state, adopted from \cite{khalil2015nonlinear}, is provided next.
	\begin{lemma}\label{lem:CtrlLem_2}			
	Let $V(\cdot,\cdot):[0,\infty)\times \mathbb R^{n_x}\to \mathbb R$ be a continuously differentiable function such that
	\begin{subequations}
		\label{eq:CtrlLem2}
		\begin{align}
		\label{eq:CtrlLem2_a}
		\gamma_1 \|x\|^2 \le V(t, x_t) &\le \gamma_2 \|x\|^2\\    				\label{eq:CtrlLem2_b}
		V(t+1, x_{t+1}) - V(t, x_t) &\le -(1-\alpha^2) V(t, x_t),
		\end{align}
	\end{subequations}
	for any $t\ge t_0$ and $x\in  \mathbb R^{n_x}$ along the trajectories of the system
	\begin{equation}
	\label{eq:sys}
	x^+ = \varphi(x),
	\end{equation} 
	where $\gamma_1$, $\gamma_2$, and $\alpha$ are positive scalars, and $\varphi(\cdot)$ is a nonlinear function. Then the equilibrium state $x=0$ for the system~\eqref{eq:sys} is GES with decay rate $\alpha$.\frqed
	\end{lemma}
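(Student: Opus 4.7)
The plan is to follow the standard discrete-time Lyapunov argument for exponential stability: combine the decrease inequality \eqref{eq:CtrlLem2_b} with the sandwich bounds \eqref{eq:CtrlLem2_a} to propagate the estimate forward in time and then invert the quadratic bound to recover a norm inequality on $x_t$.

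First I would rewrite \eqref{eq:CtrlLem2_b} as $V(t+1,x_{t+1}) \le \alpha^2\, V(t, x_t)$, which is immediate by moving $V(t,x_t)$ to the right-hand side. Then I would iterate this one-step contraction from $t_0$ up to an arbitrary $t \ge t_0$, obtaining $V(t, x_t) \le \alpha^{2(t-t_0)} V(t_0, x_{t_0})$ by a straightforward induction on $t - t_0$.

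Next I would apply the lower bound from \eqref{eq:CtrlLem2_a} to the left side and the upper bound to the right side, yielding
\[
\gamma_1 \|x_t\|^2 \;\le\; V(t,x_t) \;\le\; \alpha^{2(t-t_0)} V(t_0, x_{t_0}) \;\le\; \gamma_2\,\alpha^{2(t-t_0)} \|x_{t_0}\|^2.
\]
Dividing by $\gamma_1 > 0$ and taking square roots gives $\|x_t\| \le C_0\, \alpha^{t-t_0}\, \|x_{t_0}\|$ with $C_0 := \sqrt{\gamma_2/\gamma_1}$, which is the desired GES estimate with decay rate $\alpha$.

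I do not anticipate a genuinely hard step here; the only minor care points are (i) verifying that the induction on the one-step contraction is justified uniformly in $t$ (which it is, since \eqref{eq:CtrlLem2_b} is assumed to hold along all trajectories of \eqref{eq:sys} for every $t \ge t_0$), and (ii) being explicit that $\alpha \in (0,1)$ so that $\alpha^2 \in (0,1)$ makes the iterated inequality genuinely contractive. Since $x_0 \in \mathbb{R}^{n_x}$ is arbitrary and $C_0$ does not depend on the initial condition, global exponential stability follows.
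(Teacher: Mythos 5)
Your proof is correct and is the standard textbook argument: iterate the one-step contraction $V(t+1,x_{t+1}) \le \alpha^2 V(t,x_t)$, sandwich with the quadratic bounds, and take square roots to obtain $\|x_t\| \le \sqrt{\gamma_2/\gamma_1}\,\alpha^{t-t_0}\|x_{t_0}\|$. The paper itself does not prove this lemma; it is stated as adopted from the cited nonlinear control reference, so there is no in-paper proof to compare against, but your argument is exactly the expected one. Your care point (ii) is well taken: the hypotheses literally only say $\alpha$ is a positive scalar, yet the conclusion of GES with decay rate $\alpha$ presupposes $\alpha\in(0,1)$, so the implicit restriction to $\alpha\in(0,1)$ (which is how the lemma is invoked later, e.g.\ in Theorem~\ref{thm:control_1}) is needed for the statement to make sense.
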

	
	The following design theorem provides a method to construct an initial linear stabilizing policy $u_0(x) = K_0x$ such that the origin is a GES equilibrium state of the closed-loop system~\eqref{eq:cl_sys}.
	\begin{theorem}\label{thm:control_1}
	Suppose that Assumptions 1--2 hold, and that there exist matrices $P=P^\top\succ 0\in\mathbb R^{n_x\times n_x}$, $K_0\in\mathbb{R}^{n_u\times n_x}$, and scalars $\alpha\in (0,1)$, $\nu>0$ such that 
			\begin{align}
			\label{eq:thm_control_1}
				\Psi + \Gamma^\top \mathcal M \Gamma &\preceq 0,
			\end{align}
			where
			\begin{align*}
				\Psi &= \begin{bmatrix}
					(A+BK_0)^\top P (A+BK_0) - \alpha^2 P & \star \\ G^\top P (A+BK_0) & G^\top P G
				\end{bmatrix},\\
				\Gamma &= \begin{bmatrix}C_q & 0 \\ 0 & I\end{bmatrix},\; \text{and} \;
				\mathcal M = \begin{bmatrix}\nu^{-1} (\lip)^2 I & 0 \\ 0 & -\nu^{-1} I\end{bmatrix}.
			\end{align*}
	Then the equilibrium $x=0$ of the closed-loop system~\eqref{eq:cl_sys} is GES with decay rate $\alpha$.
	\end{theorem}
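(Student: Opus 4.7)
The plan is to use Lemma~\ref{lem:CtrlLem_2} with the quadratic candidate Lyapunov function $V(x) = x^\top P x$, and to handle the unknown Lipschitz nonlinearity $\phi$ through an S-procedure-style augmentation using the multiplier $\mathcal{M}$. The bound \eqref{eq:CtrlLem2_a} will come directly from positive definiteness of $P$: taking $\gamma_1 = \lambda_{\min}(P) > 0$ and $\gamma_2 = \lambda_{\max}(P) > 0$ gives $\gamma_1\|x\|^2 \le V(x) \le \gamma_2\|x\|^2$ for all $x\in\mathbb R^{n_x}$. The substantive content of the theorem is the decrement inequality \eqref{eq:CtrlLem2_b}, so that is where I will focus.

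First I will plug $u_0 = K_0 x$ into the closed-loop dynamics \eqref{eq:cl_sys} to get $x_{t+1} = (A+BK_0)x_t + G\phi(C_q x_t)$. Expanding $V(x_{t+1}) - \alpha^2 V(x_t)$ as a quadratic form in the augmented vector $\xi_t := [x_t^\top,\, \phi(C_q x_t)^\top]^\top$, a direct calculation yields $V(x_{t+1}) - \alpha^2 V(x_t) = \xi_t^\top \Psi \xi_t$ with $\Psi$ exactly as in the theorem statement. Next, since $\phi(0)=0$, Assumption~\ref{asmp:nonlin_knowledge} implies $\|\phi(q_t)\|^2 \le (\lip)^2 \|q_t\|^2$, i.e., the inequality
\begin{equation*}
\begin{bmatrix} q_t \\ \phi(q_t) \end{bmatrix}^\top \begin{bmatrix} \nu^{-1}(\lip)^2 I & 0 \\ 0 & -\nu^{-1} I \end{bmatrix} \begin{bmatrix} q_t \\ \phi(q_t) \end{bmatrix} \ge 0
\end{equation*}
holds for every $\nu > 0$. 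Recognizing that $[q_t^\top,\phi^\top]^\top = \Gamma \xi_t$, this is precisely $\xi_t^\top \Gamma^\top \mathcal{M}\Gamma\, \xi_t \ge 0$.

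Combining these two observations via the S-procedure, I can add the nonnegative quadratic form to the Lyapunov decrement without changing the sign of the inequality:
\begin{equation*}
V(x_{t+1}) - \alpha^2 V(x_t) \le \xi_t^\top\bigl(\Psi + \Gamma^\top \mathcal{M}\Gamma\bigr)\xi_t \le 0,
\end{equation*}
where the final inequality uses hypothesis \eqref{eq:thm_control_1}. Rearranging gives $V(x_{t+1}) - V(x_t) \le -(1-\alpha^2)V(x_t)$, which is exactly \eqref{eq:CtrlLem2_b} with $\alpha\in(0,1)$. Invoking Lemma~\ref{lem:CtrlLem_2} then yields GES of the origin with decay rate $\alpha$.

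The only nontrivial step is the S-procedure argument, and even there the work is essentially bookkeeping: one must verify that the augmented quadratic form absorbs the Lipschitz bound correctly through the coordinate change $\Gamma$. A subtler conceptual point, which should be flagged in a remark adjacent to the proof rather than inside it, is that \eqref{eq:thm_control_1} as written assumes knowledge of the true constant $\lip$, whereas in practice only the kernelized estimate $\liphat$ from Algorithm~\ref{algo:KLL} is available; the overestimation behavior empirically observed in Table~\ref{tab:benchmark_functions} is what allows $\liphat$ to be substituted for $\lip$ in the LMI while preserving the conclusion with the confidence guaranteed by Theorem~\ref{thm:worst_case}.
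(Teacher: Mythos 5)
Your proof is correct and follows essentially the same route as the paper: the same quadratic Lyapunov function $V(x)=x^\top Px$, the same expansion of $V(x_{t+1})-\alpha^2 V(x_t)$ as $\xi_t^\top\Psi\,\xi_t$ in the augmented coordinate $\xi_t = [x_t^\top,\ \phi(C_qx_t)^\top]^\top$, and the same S-procedure use of $\Gamma^\top\mathcal M\Gamma\succeq 0$ on that subspace via $\|\phi(q)\|^2\le (\lip)^2\|q\|^2$, concluding by Lemma~\ref{lem:CtrlLem_2}. The only cosmetic difference is that you add the nonnegative multiplier term to obtain an upper bound while the paper subtracts it from a nonpositive quantity; the remark about $\lip$ versus $\liphat$ you flag at the end is handled separately in the paper by Theorem~\ref{prop:1}.
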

	
	\begin{proof}
		Let $V = x^\top P x$. Then~\eqref{eq:CtrlLem2_a} in Lemma~\ref{lem:CtrlLem_2} is satisfied with $\gamma_1 = \lambda_{\min}(P)$ and $\gamma_2 = \lambda_{\max}(P)$.
		Let $\Delta V = V^+ - V$. Note that
		\begin{align*}
		V^+ &= (x^+)^\top P x^+\\
		&= \left((A+BK_0) x + G \phi\right)^\top P \left((A+BK_0) x + G \phi\right)\\
		&= x^\top(A+BK_0)^\top P (A+BK_0)x  \\ &\qquad + 2x^\top (A+BK_0)^\top P G \phi + \phi^\top G^\top P G \phi.
		\end{align*}
		Therefore,
		\begin{align*}
		\begin{bmatrix} x \\ \phi\end{bmatrix}^\top \Psi \begin{bmatrix} x \\ \phi\end{bmatrix} &= x^\top (A+BK_0)^\top P (A+BK_0) x - \alpha^2 x^\top P x \\&\qquad+ 2x^\top (A+BK_0)^\top P G \phi + \phi^\top G^\top P G \phi\\
		&=V^+ - \alpha^2 V = \Delta V + (1-\alpha^2)V,
		\end{align*}
		and
		\begin{align*}
		\begin{bmatrix} x \\ \phi\end{bmatrix}^\top \Gamma^\top \mathcal M \Gamma \begin{bmatrix} x \\ \phi\end{bmatrix} &= \begin{bmatrix} q \\ \phi\end{bmatrix}^\top \mathcal M \begin{bmatrix} q \\ \phi\end{bmatrix}= \nu\left((\lip)^2 q^\top q - \phi^\top \phi\right).
		\end{align*}
		Thus, pre- and post-multiplying~\eqref{eq:thm_control_1} with $\begin{bmatrix} x & \phi\end{bmatrix}^\top$ and its transpose, respectively, we get
		\[
		\Delta V + (1-\alpha^2) V + \nu\left((\lip)^2 q^\top  q - \phi^\top \phi\right) \le 0.
		\]
		By inequality~\eqref{eq:PhiLipschitz} in Assumption~\ref{asmp:nonlin_knowledge} and recalling that $\phi(0) = 0$, we get
		$
		(\lip)^2 q^\top  q - \phi^\top \phi \ge 0.
		$
		Since $\nu>0$, this implies
		$
		\Delta V + (1-\alpha^2) V \le 0,
		$
		which is identical to~\eqref{eq:CtrlLem2_b}.\frQED
	\end{proof}
	Note that we do not need to know $\phi(\cdot)$ to satisfy conditions~\eqref{eq:thm_control_1}. Instead, Theorem~\ref{thm:control_1} provides conditions that leverage matrix multipliers similar to those described in~\cite{chakrabarty2017state}. 

	We shall now provide LMI-based conditions for computing the initial control policy $K_0$, the initial domain of attraction $P$ and $\nu$ via convex programming.
	\begin{figure*}[!t]
	\small
	\rule[1ex]{\linewidth}{0.25pt}
	\begin{align}\label{longeq:1}
	\begin{bmatrix}
	-\alpha^2 P & 0 \\ 0 & -\nu^{-1} I
	\end{bmatrix}  + \left[\begin{array}{cc}
	(A+BK_0)^\top & C_q^\top \\ G^\top & 0
	\end{array}\right]\left[\begin{array}{cc}
	P & 0 \\ 0 & \liphat^2 \nu^{-1} I
	\end{array}\right]\left[\begin{array}{cc}
	(A+BK_0)^\top & C_q^\top \\ G^\top  & 0
	\end{array}\right]^\top &\preceq 0\\
	\label{longeq:2}
	\begin{bmatrix}
	(A+BK_0)^\top P (A+BK_0) - \alpha^2 P & (A+BK_0)^\top PG \\ G^\top P(A+BK_0) & -\nu^{-1}I + G^\top P G
	\end{bmatrix} +
	\begin{bmatrix} C_q \\ 0 \end{bmatrix} (\liphat)^2\nu^{-1} I \begin{bmatrix} C_q \\ 0 \end{bmatrix}^\top &\preceq 0
	\end{align}
	\rule[1ex]{\linewidth}{0.25pt}
	\end{figure*}
	
	\begin{theorem}
		\label{thm:control_2}		
		Fix $\alpha\in (0,1)$ and $\liphat$ obtained via~\eqref{eq:liphat}. If there exist matrices $S=S^\top\succ 0$, $Y$, and a scalar $\nu>0$ such that the LMI conditions
			\begin{align}
				\label{eq:mainLMI_d}
				\begin{bmatrix}
					-\alpha^2 S & \star & \star & \star \\
					0 & - \nu I & \star & \star\\
					AS + BY & \nu GS & -S & \star\\
					\liphat C_q S & 0 & 0 & -\nu I
				\end{bmatrix} &\preceq 0
			\end{align}
		are satisfied, then the matrices $K_0=YS^{-1}$, $P=S^{-1}$ and scalar $\nu$ satisfy the conditions~\eqref{eq:thm_control_1} with the same $\alpha$ and $\liphat$.
	\end{theorem}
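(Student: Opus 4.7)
The plan is to show that the LMI~\eqref{eq:mainLMI_d} is equivalent, via a congruence transformation and two applications of the Schur complement lemma, to the matrix inequality~\eqref{eq:thm_control_1} of Theorem~\ref{thm:control_1} (with $\lip$ replaced by $\liphat$) under the change of variables $P = S^{-1}$ and $Y = K_0 S$. Once this equivalence is established, feasibility of~\eqref{eq:mainLMI_d} immediately yields the hypothesis of Theorem~\ref{thm:control_1}, and global exponential stability of the closed-loop equilibrium with decay rate $\alpha$ follows at once.

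First, I would rewrite the condition of Theorem~\ref{thm:control_1} in the compact form
$$\begin{bmatrix} -\alpha^2 P + \nu^{-1}\liphat^{2} C_q^\top C_q & 0 \\ 0 & -\nu^{-1} I \end{bmatrix} + \begin{bmatrix} (A+BK_0)^\top \\ G^\top \end{bmatrix} P \begin{bmatrix} A+BK_0 & G \end{bmatrix} \preceq 0,$$
which isolates the quadratic dependence on $P$. Since $P \succ 0$, the Schur complement lemma lets me absorb this quadratic term by introducing an auxiliary row and column with diagonal block $-P^{-1}$ and off-diagonal blocks $A+BK_0$ and $G$.

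Second, I would treat the rank-$n_q$ summand $\nu^{-1}\liphat^{2} C_q^\top C_q$ by the same device, introducing a further row and column whose diagonal block is $-\nu I$ and whose off-diagonal entry is $\liphat C_q$. At this stage the matrix inequality is linear in $P$, $K_0$, and $\nu$ but still contains the unknown inverse $P^{-1}$. A congruence transformation with $\mathrm{blkdiag}(P^{-1}, I, I, I)$ then maps every $P$-block into an $S$-block via $SPS = S$ and, crucially, converts $(A+BK_0)\,S = AS + BK_0 S = AS + BY$ with the new variable $Y := K_0 S$, thereby linearizing the product. A concluding diagonal rescaling of the second block by $\nu$ clears the remaining $\nu^{-1}$ factor, and the result is precisely the LMI~\eqref{eq:mainLMI_d}.

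Because congruence with a nonsingular matrix and Schur complement with a negative-definite pivot are both equivalences of matrix inequalities, feasibility of~\eqref{eq:mainLMI_d} is equivalent to~\eqref{eq:thm_control_1} (with $\liphat$) holding for $P = S^{-1}$ and $K_0 = Y S^{-1}$, which is exactly what the theorem claims. The main obstacle is bookkeeping rather than any genuine difficulty: one must track the sign conventions in each Schur complement step, verify that the pivot blocks are negative definite (guaranteed by $\nu>0$ and $S\succ 0$), and confirm that the off-diagonal entries land in the exact positions and with the exact scalings of~\eqref{eq:mainLMI_d}. No additional assumption on $\phi$ beyond Assumptions~\ref{asmp:matrices_knowledge}--\ref{asmp:nonlin_knowledge} is needed, since the nonlinearity only enters through the quadratic constraint $\liphat^{2} q^\top q - \phi^\top \phi \ge 0$ already exploited in the proof of Theorem~\ref{thm:control_1}.
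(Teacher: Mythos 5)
Your proof is correct and follows essentially the same route as the paper: both rely on the chain \emph{congruence transformation $\leftrightarrow$ Schur complement $\leftrightarrow$ Schur complement}, with the paper starting from~\eqref{eq:mainLMI_d} and working toward~\eqref{eq:thm_control_1}, while you traverse the equivalences in the reverse direction (and you correctly note they are equivalences, so the direction is immaterial). One small point worth recording: if you carry your chain through to the end with the congruence $\mathrm{blkdiag}(S,\nu I, I, I)$, the $(3,2)$ block lands as $\nu G$, not the $\nu G S$ printed in~\eqref{eq:mainLMI_d}; this traces to a typographical slip in the paper (the paper's own stated congruence $\mathrm{blkdiag}(P,\nu^{-1}I,P,I)$ also fails to remove that $S$, and would in addition leave a stray $P$ in front of $A+BK_0$), so your derivation is the self-consistent one and the intended LMI has $\nu G$ in that block.
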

	
	\begin{proof}
		
		A congruence transformation of~\eqref{eq:mainLMI_d} with the matrix $\mathrm{blkdiag}\left(\begin{bmatrix}
		P & \nu^{-1}\,I & P & I 
		\end{bmatrix}\right)$ and substituting $S$ with $P^{-1}$ and $Y$ with $K_0P^{-1}$ yields
		\[
		\left[\begin{array}{cc|cc}
		-\alpha^2 P & \star & \star & \star \\
		0 & - \nu^{-1} I & \star & \star\\\hline
		A+BK_0 & G & - P & \star\\
		\liphat C_q & 0 & 0 & -\nu I
		\end{array}\right] \preceq 0.
		\]
		Taking the Schur complement with the submatrices shown by the guidelines in the above inequality, we get~\eqref{longeq:1}. Since $\nu >0$, taking the Schur complement again yields~\eqref{longeq:2}
		which can be rewritten as
		\begin{align*}
		\Psi -  \begin{bmatrix} 0 \\ I \end{bmatrix} \nu^{-1} I \begin{bmatrix} 0 \\ I \end{bmatrix}^\top + \begin{bmatrix} C_q\\ 0 \end{bmatrix} (\liphat)^2\nu^{-1} I \begin{bmatrix} C_q \\ 0 \end{bmatrix}^\top &\preceq 0
		\end{align*}
		which is exactly~\eqref{eq:thm_control_1}. Thus, the conditions~\eqref{eq:thm_control_1} and~\eqref{eq:mainLMI_d} are equivalent.\frQED
	\end{proof}
	
	Empirically, we observe that our proposed kernelized Lipschitz learner  typically provides overestimates of $\lip$ (see Appendix). 
	A benefit of overestimating $\lip$ is that admissibility of the control policy is ensured. This is demonstrated by the following result.
	\begin{theorem}\label{prop:1}
		Let $(P, K_0, \nu, \alpha)$ be a feasible solution to the conditions~\eqref{eq:thm_control_1} with an overestimate of the Lipschitz constant $\liphat>\lip$. Then $(P, K_0, \nu, \alpha)$ is also a feasible solution to the conditions~\eqref{eq:thm_control_1}.
	\end{theorem}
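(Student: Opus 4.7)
The plan is to exploit the monotonicity of condition~\eqref{eq:thm_control_1} in the Lipschitz parameter appearing inside the multiplier matrix $\mathcal M$. Observe that $\Psi$ does not depend on the Lipschitz constant at all, and $\Gamma$ depends only on $C_q$; the Lipschitz value enters the inequality exclusively through the $(1,1)$-block of $\mathcal M$, as $\nu^{-1}(\cdot)^2 I$. I would emphasize this structural observation at the start of the proof so the rest is essentially bookkeeping.

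First, introduce the notation $\mathcal M(\ell)$ for the multiplier built from a Lipschitz value $\ell\ge 0$, so that
\begin{equation*}
\mathcal M(\liphat) - \mathcal M(\lip) \;=\; \mathrm{blkdiag}\!\left(\nu^{-1}\bigl((\liphat)^2-(\lip)^2\bigr) I,\; 0\right).
\end{equation*}
Because $\liphat > \lip \ge 0$ and $\nu>0$, the right-hand side is positive semidefinite, hence $\mathcal M(\liphat)\succeq \mathcal M(\lip)$. Conjugating by $\Gamma$ preserves the semidefinite ordering, giving $\Gamma^\top \mathcal M(\liphat)\Gamma \succeq \Gamma^\top \mathcal M(\lip)\Gamma$.

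Next, I would combine this with the assumed feasibility at $\liphat$: since
\begin{equation*}
\Psi + \Gamma^\top \mathcal M(\lip)\,\Gamma \;\preceq\; \Psi + \Gamma^\top \mathcal M(\liphat)\,\Gamma \;\preceq\; 0,
\end{equation*}
the tuple $(P,K_0,\nu,\alpha)$ also satisfies~\eqref{eq:thm_control_1} when the multiplier is built with the \emph{true} Lipschitz constant $\lip$. Invoking Theorem~\ref{thm:control_1} then certifies exponential stability of the closed loop under the actual nonlinearity, which is the practically relevant conclusion.

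There is no genuine obstacle here: the result is a monotonicity argument, and the only care required is to track signs correctly, in particular that the change from $\lip$ to $\liphat$ affects only the positive block of $\mathcal M$ (the $-\nu^{-1}I$ block is unchanged), so the perturbation to $\Psi+\Gamma^\top \mathcal M\Gamma$ is positive semidefinite and the inequality direction is preserved. This also justifies, at a conceptual level, why overestimation by the kernelized Lipschitz learner is the desirable failure mode in Table~\ref{tab:benchmark_functions}: overestimates shrink, rather than enlarge, the feasible set of certificates and therefore never produce spurious stability guarantees.
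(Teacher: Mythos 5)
Your proposal is correct and uses essentially the same argument as the paper: both exploit the fact that the Lipschitz constant enters the LMI only through the $(1,1)$-block of $\mathcal M$, so replacing $\liphat$ by the smaller $\lip$ perturbs $\Psi+\Gamma^\top\mathcal M\Gamma$ by subtracting a positive-semidefinite term, preserving $\preceq 0$. The paper writes the perturbation as $\nu^{-1}(2\lip\,\delta L+\delta L^2)I$ with $\delta L=\liphat-\lip$ rather than $\nu^{-1}\bigl((\liphat)^2-(\lip)^2\bigr)I$, which is algebraically identical; your sign bookkeeping is in fact cleaner, since the paper's proof as typeset carries a spurious minus sign on $\delta\mathcal M$ that makes its final deduction read backwards even though the intended (and correct) argument is the one you gave.
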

	
	\begin{proof}
		Let $\delta L = \liphat - \lip$. Since $\liphat$ is an overestimator of $\lip$, $\delta L>0$. Since $(P, K, \nu,\alpha)$ is a feasible solution to~\eqref{eq:thm_control_1} with $\liphat$, it satisfies
		\[
		\Psi + \Gamma^\top  \begin{bmatrix} -\nu^{-1}(\lip+\delta L)^2 I & 0\\ 0 & I\end{bmatrix} \Gamma \preceq 0,
		\]
		which can be written as
		\[
		\Psi + \Gamma^\top \mathcal M \Gamma + \Gamma^\top  \underbrace{\begin{bmatrix} -\nu^{-1}(2\lip\delta L + \delta L^2) I & 0\\ 0 & 0\end{bmatrix}}_{:=\delta {\mathcal M}} \Gamma \preceq 0.
		\]
		As $\nu>0$, we infer that $\delta\mathcal M\preceq 0$, hence $\Gamma^\top \delta\mathcal M \Gamma \preceq 0$.
		Therefore, $\Psi + \Gamma^\top\mathcal M\Gamma \preceq 0$. Since the other conditions in~\eqref{eq:thm_control_1} are independent of $\lip$, the other conditions are automatically satisfied. This concludes the proof.\frQED
	\end{proof}
	
	Theorem~\ref{prop:1} indicates that if our learned $\liphat$ is an overestimate of $\lip$, and we use $\liphat$ to obtain a safe stabilizing control policy, then this is also a safe stabilizing control policy for the true system~\eqref{eq:true_sys}.
	Having a feasible solution to~\eqref{eq:thm_control_1} with an underestimator of $\lip$ is not sufficient to guarantee a feasible solution for the true Lipschitz constant, because $\delta\mathcal M$ may not be negative semi-definite in that case. Of course, extremely conservative overestimates of $\liphat$ will result in conservative control policies or result in infeasibility. In our proposed approach, we have observed that the confidence parameter $\beta$ dictates the conservativeness of the overestimate; that is $\beta\to 1$ makes the estimate $\liphat$ more conservative.
	
	\subsubsection{Safely initialized PI} 
	
	We begin by proving the following critical result.
	\begin{theorem}\label{thm:admissible_1}
	Let $\mathcal U(x,u)$ be defined as in~\eqref{eq:cost}. If $K_0$ is obtained by solving~\eqref{eq:mainLMI_d} for $\liphat\ge \lip$, then the initial control policy $u_0=K_0x$ is an admissible control policy on $\mathbb R^{n_x}$.
	\end{theorem}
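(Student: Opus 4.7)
The plan is to chain together the three preceding controller-design results to get stability, and then argue cost finiteness separately. First I would apply Theorem~\ref{thm:control_2}: since $K_0$ is obtained from a feasible solution of the LMI~\eqref{eq:mainLMI_d} instantiated with $\liphat$, the returned triple $(P, K_0, \nu)$ together with the fixed $\alpha \in (0,1)$ automatically satisfies the matrix inequality~\eqref{eq:thm_control_1} with $\liphat$ playing the role of the Lipschitz constant. Next, because $\liphat \ge \lip$ by hypothesis, Theorem~\ref{prop:1} lets me replace $\liphat$ by the true $\lip$ in~\eqref{eq:thm_control_1} without breaking feasibility. Invoking Theorem~\ref{thm:control_1} then yields global exponential stability of the equilibrium $x=0$ of~\eqref{eq:cl_sys} under $u_0 = K_0 x$: there exist $C_0 > 0$ and $\alpha \in (0,1)$ with $\|x_t\| \le C_0 \alpha^{t} \|x_0\|$ for every $x_0 \in \mathbb R^{n_x}$. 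This settles the stabilization clause of Definition~\ref{defn:admissible_policy}.

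It remains to verify that $\mathcal J(x_0, u_0)$ is finite for every $x_0 \in \mathbb R^{n_x}$. Since $u_0(x_t) = K_0 x_t$, we have $\|u_0(x_t)\| \le \|K_0\|\,\|x_t\|$, so both state and input decay as $O(\alpha^t)$ along the closed-loop trajectory. Under the standard regularity assumption on the stage cost (continuity with $\mathcal U(0,0)=0$ and polynomial growth, which covers any quadratic running cost of the usual ADP form), there exist $p \ge 1$ and $C_1 > 0$ such that $\mathcal U(x, K_0 x) \le C_1 \|x\|^p$ on the (bounded) trajectory, whence
$$
\mathcal J(x_0, u_0) \;\le\; C_1 (C_0 \|x_0\|)^p \sum_{t=0}^\infty (\gamma\alpha^p)^t,
$$
which converges since $\gamma \in (0,1]$ and $\alpha \in (0,1)$. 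Combined with GES, this gives admissibility.

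The main obstacle is really this second step: the statement of the problem in Section~\ref{sec:prelim} only requires $\mathcal U \ge 0$ and $\mathcal U(0,0)=0$, which is not by itself enough to guarantee finiteness of the infinite-horizon sum even under exponential stability. I would expect the proof to either invoke continuity and a polynomial growth bound on $\mathcal U$ implicitly (as is customary in the ADP literature cited, where $\mathcal U$ is typically a positive-definite quadratic $x^\top Q x + u^\top R u$), or to state the result under that specialization, in which case the Rayleigh bound $\mathcal U(x_t, K_0 x_t) \le \lambda_{\max}(Q + K_0^\top R K_0)\|x_t\|^2$ makes the cost bound immediate. Everything else in the argument is a direct assembly of Theorems~\ref{thm:control_1}, \ref{thm:control_2}, and~\ref{prop:1}, with no new calculations required.
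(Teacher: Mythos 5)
Your proof is correct and takes the same route as the paper for the stability clause: the paper likewise chains Theorems~\ref{thm:control_2} and~\ref{thm:control_1} (with the overestimate argument of Theorem~\ref{prop:1} implicit in the phrase ``since $u_0$ is stabilizing and $\liphat\ge\lip$'') to conclude that $u_0 = K_0 x$ renders the origin of~\eqref{eq:cl_sys} globally exponentially stable, and notes continuity of $u_0$. The divergence is in the cost-finiteness step, and here you have actually located a real gap in the paper's own argument. The paper argues: $x_t\to 0$, hence $\mathcal U(x_t,u_t)\to 0$, hence $\mathcal U(x_t,u_t)$ is bounded, hence ``any partial sum $\sum_{t=0}^{t'}\mathcal U(x_t,u_t)$ is bounded and monotonic.'' Monotonicity is immediate from $\mathcal U\ge 0$, but boundedness of the partial sums does \emph{not} follow from boundedness of the summands; indeed for $\gamma=1$ (which the problem statement permits since $\gamma\in(0,1]$) this inference is simply false — a bounded sequence tending to zero can still have a divergent sum. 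When $\gamma<1$ the paper's reasoning is salvageable because $\sum\gamma^t M<\infty$ for any bound $M$, but the paper does not make that restriction. Your argument repairs the gap uniformly over $\gamma\in(0,1]$: you exploit the exponential decay rate $\alpha$ furnished by GES together with a polynomial growth bound $\mathcal U(x,K_0x)\le C_1\|x\|^p$ near the origin to obtain a geometric majorant $\sum(\gamma\alpha^p)^t$, which converges because $\alpha<1$. The price is the explicit extra hypothesis of continuity and polynomial growth on $\mathcal U$; but as you note, the bare assumptions stated in Section~\ref{sec:prelim} ($\mathcal U\ge 0$, $\mathcal U(0,0)=0$) are in fact insufficient to conclude finiteness of the cost, and the paper implicitly relies on continuity of $\mathcal U$ anyway when it asserts $\mathcal U(x_t,u_t)\to 0$. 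For the quadratic stage cost~\eqref{eq:LQR_cost} treated in the corollary, your bound specializes cleanly via the Rayleigh quotient, and the result is immediate. In short: same skeleton, but your finiteness argument is rigorous where the paper's is not.
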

	\begin{proof}
	Clearly, $u_0$ is continuous, and (by Theorem~\ref{thm:control_1} and~\ref{thm:control_2}) is a stabilizing control policy for~\eqref{eq:true_sys}. It remains to show that the cost induced by $u_0$ is finite. Since $u_0$ is stabilizing and $\liphat\ge \lip$, we know that $\|x_t\|\to 0$ as $t\to\infty$, which implies $u_0\to 0$ and, by therefore, $\mathcal U(x_t,u_t)\to 0$ as $t\to\infty$. Since $\mathcal U(x_t,u_t)$ converges to a finite limit, $\mathcal U(x_t,u_t)$ is bounded for all $t\ge 0$. Therefore, any partial sum $\sum_{t=0}^{t'} \mathcal U(x_t,u_t)$ is bounded and monotonic; that is, $\mathcal J$ converges to a finite limit.\frQED
	\end{proof}
	Admissibility of $u_0$ for the specific linear quadratic regulator (LQR) cost function follows directly from Theorem~\ref{thm:admissible_1}.
	\begin{corollary}
	Let 
	\begin{equation}\label{eq:LQR_cost}
	\mathcal U(x_t,u_t)=x_t^\top Q x_t + u_t^\top R u_t 
	\end{equation}
	for some matrices $Q=Q^\top\succeq 0$ and $R=R^\top\succ 0$. Then the initial control policy $u_0=K_0x$ obtained by solving~\eqref{eq:mainLMI_d} is an admissible control policy on $\mathbb R^{n_x}$.
	\end{corollary}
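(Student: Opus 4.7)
The plan is to derive the corollary as a direct specialization of Theorem~\ref{thm:admissible_1}, after verifying that the LQR stage cost in~\eqref{eq:LQR_cost} fits the abstract cost structure assumed in~\eqref{eq:cost}. First I would observe that since $Q=Q^\top\succeq 0$ and $R=R^\top\succ 0$, the map $(x,u)\mapsto x^\top Q x + u^\top R u$ is continuous, non-negative, and vanishes at $(0,0)$. Thus~\eqref{eq:LQR_cost} is a valid instance of $\mathcal U$ in the cost functional~\eqref{eq:cost}, and the hypotheses of Theorem~\ref{thm:admissible_1} are met once $K_0$ is obtained from~\eqref{eq:mainLMI_d} with $\liphat\ge\lip$.

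Next I would invoke Theorem~\ref{thm:control_1} and Theorem~\ref{thm:control_2} to conclude that $u_0(x)=K_0x$ is continuous and renders the origin of~\eqref{eq:cl_sys} GES with decay rate $\alpha\in(0,1)$. Admissibility (Definition~\ref{defn:admissible_policy}) then reduces to showing that $\mathcal J(x_0,u_0)$ is finite for every $x_0\in\mathbb R^{n_x}$.

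For the cost bound, rather than rely solely on the sketch already given in Theorem~\ref{thm:admissible_1}, I would exploit exponential stability quantitatively. GES gives $\|x_t\|\le C_0\alpha^t\|x_0\|$, and hence $u_t=K_0x_t$ satisfies $\|u_t\|\le C_0\|K_0\|\alpha^t\|x_0\|$. Substituting into~\eqref{eq:LQR_cost} yields
\begin{equation*}
\mathcal U(x_t,u_t) \le C_0^2\bigl(\lambda_{\max}(Q) + \|K_0\|^2\lambda_{\max}(R)\bigr)\alpha^{2t}\|x_0\|^2.
\end{equation*}
Summing against the discount factor $\gamma\in(0,1]$ and using the common ratio $\gamma\alpha^2<1$,
\begin{equation*}
\mathcal J(x_0,u_0) \le \frac{C_0^2\bigl(\lambda_{\max}(Q)+\|K_0\|^2\lambda_{\max}(R)\bigr)}{1-\gamma\alpha^2}\,\|x_0\|^2 < \infty,
\end{equation*}
for every $x_0\in\mathbb R^{n_x}$. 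Combining continuity, stabilization, and cost-finiteness establishes admissibility on the whole state space, and the corollary follows.

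The only non-routine step is tying admissibility to the specific LQR structure cleanly; the main potential pitfall is ensuring the cost is finite without relying on the loose bounded-implies-summable argument in the proof of Theorem~\ref{thm:admissible_1}. The exponential decay rate $\alpha$ from Theorem~\ref{thm:control_1}, together with the quadratic form of~\eqref{eq:LQR_cost}, makes this immediate via a geometric series, so no new assumptions beyond those already in force are required.
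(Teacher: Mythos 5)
Your proposal follows the same high-level route as the paper: the corollary is stated as an immediate specialization of Theorem~\ref{thm:admissible_1}, after checking that the LQR stage cost in~\eqref{eq:LQR_cost} is non-negative, continuous, and vanishes at the origin. Where you diverge is in how you establish finiteness of $\mathcal J$, and here your version is not merely cosmetically different but actually repairs a gap. The proof of Theorem~\ref{thm:admissible_1} argues that $\mathcal U(x_t,u_t)\to 0$ implies the terms are bounded, and then concludes that the monotone partial sums $\sum_{t=0}^{t'}\mathcal U(x_t,u_t)$ are bounded. That inference is invalid: bounded (even vanishing) summands do not yield a convergent series in general, and since the paper permits $\gamma=1$ the discount factor alone does not save it. Your quantitative argument does the right thing: GES from Theorem~\ref{thm:control_1}--\ref{thm:control_2} gives $\|x_t\|\le C_0\alpha^t\|x_0\|$ with $\alpha\in(0,1)$, the quadratic form of the LQR cost then gives $\mathcal U(x_t,u_t)=\mathcal O(\alpha^{2t}\|x_0\|^2)$, and the series is dominated by a geometric series with ratio $\gamma\alpha^2<1$, which holds for every $\gamma\in(0,1]$. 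So your approach buys a correct and sharper finiteness bound (indeed an explicit quadratic upper bound on the value), at the modest cost of relying on the LQR structure rather than on the abstract hypotheses of Theorem~\ref{thm:admissible_1} alone. One minor note: the constant $C_0$ you use is the GES overshoot constant from the exponential-stability definition, not the $C_0$ appearing in Theorem~\ref{thm:worst_case}; a different symbol would avoid the clash.
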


	Now that we know $u_0=K_0x$ is an admissible control policy, we are ready to proceed with the policy iteration steps~\eqref{eq:policy_iteration}. Typically, an analytical form of $\mathcal J_k$ is not known \textit{a priori}, so we resort to a shallow neural approximator/truncated basis expansion for fitting this function, assuming $\mathcal J_k$ is smooth for every $k\in\mathbb N\cup\{\infty\}$. Concretely, we represent the value function and cost functions as:
	\begin{equation}
	\label{eq:Jk}
	\mathcal J_k(x) :=\omega_k^\top \psi(x)
	\end{equation}
	where $\psi_0(\cdot):\mathbb {R}^{n_x}\to \mathbb R^{n_0}$ denotes the set of differentiable basis functions (equivalently, hidden layer neuron activations) and $\omega:\mathbb{R}^{n_0}$ is the corresponding column vector of basis coefficients (equivalently, neural weights). 
	
	It is not always clear how to initialize the weights of the neural approximators~\eqref{eq:Jk}. Commonly, small random numbers drawn from a uniform distribution are used~\cite{liu2014policy}, but there is no safety guarantee associated with random initialization. We propose initializing the weights as follows. Since our initial Lyapunov function is quadratic, we include the quadratic terms of the components of $x$ to be in the basis $\psi(x)$. Then we can express the initial Lyapunov function $x^\top P x$ obtained by solving~\eqref{eq:mainLMI_d} with appropriate weights in the $\psi(x)$, respectively, setting all other weights to be zero.
	With the approximator initialized as above, the policy evaluation step~\eqref{eq:policy_evaluation_ADP} is replaced by
	\begin{subequations}
	\label{eq:policy_iteration_nn}
	\begin{equation}
	\label{eq:policy_evaluation_nn}
	\omega_{k+1}^\top \big(\psi(x_t)- \gamma\psi(x_{t+1})\big) = \mathcal U\left(x_t, u_k(x_t)\right),
	\end{equation}
	from which one can solve for $\omega_{k+1}$ recursively via
	\[
	\omega_{k+1} = \omega_{k} - \eta_k\varphi_k\left(\omega_{k}^\top \varphi_k - \mathcal U\left(x_t, u_k(x_t)\right)\right),
	\] 
	where $\eta_k$ is a learning rate parameter that is usually selected to be an element from the sequence $\{\eta_k\}\to 0$ as $k\to\infty$, and $	
	\varphi_k = \psi(x_t) - \gamma \psi(x_{t+1})$.
	Subsequently, the policy improvement step~\eqref{eq:policy_improvement_ADP} is replaced by
	\begin{equation*}
	u_{k+1} = \argmin_{u(\cdot)} \left(\mathcal U\left(x_t, u(x_t)\right) + \gamma \omega_{k+1}^\top \psi(x_{t+1})\right).
	\end{equation*}
	This minimization problem is typically non-convex and therefore, challenging to solve to optimality. In some specific cases, one of which is that the cost function is quadratic as described in~\eqref{eq:LQR_cost}, the policy improvement step becomes considerably simpler to execute, namely 
	\begin{equation}
	\label{eq:policy_improvement_nn}
	u_{k+1}(x) = -\frac{\gamma}{2} R^{-1} B^\top \nabla\psi(x)^\top \omega_{k+1}.
	\end{equation}
	\end{subequations}
	This can be evaluated as $R$ and $B$ are known, and $\psi$ is differentiable and chosen by the user, so $\nabla \psi$ is computable. 
	
	Since we prove that $u_0$ is an admissible control policy, we can use arguments identical to~\cite[Theorem 3.2 and Theorem 4.1]{liu2014policy} to claim that if the optimal value function and the optimal control policy are dense in the space of functions induced by the basis function expansions~\eqref{eq:Jk}, then the weights of the neural approximator employed in the PI steps~\eqref{eq:policy_iteration_nn} converges to the optimal weights; that is, the optimal value function $\mathcal J_\infty$ and the optimal control policy $u_\infty$ are achieved asymptotically. A pseudocode for implementation is provided next.
	
		\begin{algorithm}[!ht]
		\small
		\caption{Safely Initialized PI for discrete-time systems}
		\label{algo:PI}
		\begin{algorithmic}[1]
			\Require Termination condition constant $\epsilon_\mathrm{ac}$
			\Require Historical data $\mathcal D$
			\State Estimate Lipschitz constant $\liphat$ using Algorithm~\ref{algo:KLL}
			\Require Compute stabilizing control gain $K_0$ via SDP~\eqref{eq:mainLMI_d}
			\State Fix admissible control policy $u_0(x)=K_0x$
			\While {$\left\|\mathcal J_k - \mathcal J_{k-1}\right\|\geq \epsilon_\mathrm{ac}$}
			\State Solve for the value $\mathcal J_{k}(x)$ using
			\[
			\mathcal J_{k+1}(x_t)=\mathcal U(x_t,u_k(x_t))+\gamma \mathcal J_{k+1}(x_{t+1}).
			\]
			\State Update the control policy $u_{(k+1)}(x)$ using
			\[
			u_{k+1}(x_t)=\argmin_{u(\cdot)}\big(\mathcal U(x_t,u_k(x_t))+\gamma \mathcal J_{k+1}(x_{t+1})\big).
			\]
			\State $k:=k+1$
			\EndWhile
		\end{algorithmic}
	\end{algorithm}

	\subsection{Input-constrained ADP with safety}
	Herein, we tackle the case when the control input is to be constrained, which is very common in practical applications. We make the following assumption on the constraints.
	\begin{assumption}\label{asmp:polytopic_constraints}
	The control input $u\in\mathbb U$, where
	\begin{equation}\label{eq:constraints}
	\mathbb U = \left\{u\in\mathbb{R}^{n_u}: \xi_i^\top u\le 1\right\},
	\end{equation}
	for $i=1,\ldots, n_{c}$, where $n_c$ is the number of input constraints, and $\xi_i\in\mathbb R^{n_u}$ for every $i$.\frqed
	\end{assumption}
	\begin{remark}
	The matrix inequality~\eqref{eq:constraints} defines a polytopic input constraint set. Clearly, constraints of the form $|u|\le \bar u$ can be written as
	\[
	\begin{bmatrix} \xi_i \\ \xi_{i+1} \end{bmatrix}u = \begin{bmatrix}
	0 & \cdots & 1/\bar u & \cdots & 0\\
	0 & \cdots & -1/\bar u & \cdots & 0
	\end{bmatrix}u\le \begin{bmatrix}
	1\\ 1
	\end{bmatrix},
	\] 
	which is of the form~\eqref{eq:constraints}.\frqed
	\end{remark}
	
	Note that with any control policy $u_0=K_0 x$, the constraint set described in~\eqref{eq:constraints} is equivalent to the set
	\begin{equation}\label{eq:constraints_x}
		\mathcal X = \left\{x\in\mathbb{R}^{n_x}: \xi_i^\top K_0 x \le 1\right\},
	\end{equation}
	for $i=1,\ldots, n_c$.
	Before we state the main design theorem, we require the following result from~\cite[pp. 69]{boyd1994linear}.
	\begin{lemma}\label{lem:ellipsoid}
		The ellipsoid 
		\begin{subequations}
		\begin{equation}\label{eq:ellipsoid_P}
			\mathcal E_P=\{x\in\mathbb R^{n_x}: x^\top P x \le 1\}
		\end{equation} is a subset of $\mathcal X$ if and only if
		\begin{equation}\label{eq:ellipsoid_containment}
			\xi_i K_0^\top P^{-1} K_0\, \xi_i^\top \le 1	
		\end{equation}
		\end{subequations}
		for $i=1,\ldots, n_c$.\frqed
	\end{lemma}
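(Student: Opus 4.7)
The plan is to treat the polytopic region $\mathcal X$ as an intersection of halfspaces and reduce the ellipsoid containment to a collection of halfspace containments, one per constraint. Specifically, since $\mathcal X = \bigcap_{i=1}^{n_c} \{x : (K_0^\top \xi_i)^\top x \le 1\}$, the containment $\mathcal E_P\subseteq \mathcal X$ is equivalent to $\mathcal E_P \subseteq \{x : a_i^\top x \le 1\}$ for every $i$, where I set $a_i := K_0^\top \xi_i$. This decomposition is the only substantive combinatorial step; the rest is a support-function computation applied one halfspace at a time.

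For each $i$, I would rewrite the containment as a support-function inequality: $\mathcal E_P\subseteq \{a_i^\top x\le 1\}$ holds iff $\sup_{x\in\mathcal E_P} a_i^\top x \le 1$. To evaluate this supremum, I would make the change of variables $y = P^{1/2}x$ (valid since $P\succ 0$), which maps $\mathcal E_P$ bijectively to the unit ball $\{\|y\|\le 1\}$. The supremum becomes $\sup_{\|y\|\le 1} (P^{-1/2}a_i)^\top y$, which by Cauchy--Schwarz equals $\|P^{-1/2} a_i\| = \sqrt{a_i^\top P^{-1} a_i}$, with the maximum attained at $y^\star = P^{-1/2}a_i/\|P^{-1/2}a_i\|$ (proving the ``only if'' direction via an explicit maximizer).

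Squaring and substituting $a_i = K_0^\top \xi_i$ then gives precisely the condition $\xi_i^\top K_0 P^{-1} K_0^\top \xi_i \le 1$ for each $i$, which matches~\eqref{eq:ellipsoid_containment} (modulo transpose conventions). Both directions follow from the same identity: the sufficient direction is immediate since $a_i^\top x \le \sqrt{a_i^\top P^{-1}a_i}\,\sqrt{x^\top P x}\le 1$ for all $x\in\mathcal E_P$; the necessary direction uses the explicit maximizer $y^\star$ above, whose pre-image lies in $\mathcal E_P$ and achieves the bound.

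There is no real obstacle here: the argument is standard and the positive definiteness of $P$ (assumed in~\eqref{eq:ellipsoid_P}) is what permits both the change of variables and the use of $P^{-1}$. The only care needed is to verify the transpose conventions in~\eqref{eq:ellipsoid_containment}, since $\xi_i$ is a column vector in $\mathbb R^{n_u}$; the genuine scalar expression arising from the computation is $\xi_i^\top K_0 P^{-1} K_0^\top \xi_i$, which I would use explicitly in the write-up and note that this is precisely the condition stated (up to notational transpose). A reference to~\cite[pp.~69]{boyd1994linear} may be cited in lieu of reproducing the derivation.
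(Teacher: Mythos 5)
Your argument is correct and complete. Note that the paper itself gives no proof of Lemma~\ref{lem:ellipsoid}: it simply quotes the result from \cite[pp.~69]{boyd1994linear}, so there is no ``paper's proof'' to compare against. Your derivation --- decomposing the polytope into halfspaces, reducing each containment to a support-function bound $\sup_{x\in\mathcal E_P} a_i^\top x \le 1$, computing that support via the substitution $y = P^{1/2}x$ and Cauchy--Schwarz, and then squaring --- is the standard self-contained argument and handles both directions cleanly (sufficiency by the inequality, necessity by exhibiting the maximizer). You are also right that the displayed condition~\eqref{eq:ellipsoid_containment} as printed is dimensionally inconsistent ($\xi_i$ is a column in $\mathbb R^{n_u}$, so $\xi_i K_0^\top$ is not a valid product); the correct scalar is $\xi_i^\top K_0 P^{-1} K_0^\top \xi_i \le 1$, which is what your computation produces and what the Schur complement of~\eqref{eq:mainLMI_c_constrained} in Theorem~\ref{thm:constrainedADP} actually yields (recall $Y = K_0 S$, $S = P^{-1}$, so $\xi_i^\top Y S^{-1} Y^\top \xi_i = \xi_i^\top K_0 P^{-1} K_0^\top \xi_i$). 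So your version of the inequality, not the one as typeset in the lemma, is the one the rest of the paper relies on.
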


	\subsubsection{Constrained admissible initial control policy and invariant set estimation}
	Since the control input is constrained, we need to characterize an invariant set of the form $\mathcal E_P$ within which all control actions satisfy~\eqref{eq:constraints} and the following stability certificate holds.
	\begin{definition}
		The equilibrium point $x=0$ of the closed-loop system~\eqref{eq:cl_sys} is locally exponentially stable with a decay rate $\alpha$ and a domain of attraction $\mathcal E_P$ if there exist scalars $C_0>0$ and $\alpha\in (0,1)$ such that
		$\|x_t\| \le C_0\alpha^{(t-t_0)}\|x_0\|$
		for any $x_0\in \mathcal E_P$.\frqed
	\end{definition}
	
	A standard result for testing local exponential stability of the equilibrium point adopted from~\cite{aitken1994exponential} is provided next.
	\begin{lemma}\label{lem:CtrlLem_2_constrained}
		Let $V:[0,\infty)\times \mathcal E_P\to \mathbb R$ be a continuously differentiable function such that the inequalities~\eqref{eq:CtrlLem2} hold for any $t\ge t_0$ and $x\in \mathcal E_P$ along the trajectories of the system~\eqref{eq:cl_sys} where $\gamma_1$, $\gamma_2$, and $\alpha$ are positive scalars. Then the equilibrium $x=0$ for the system~\eqref{eq:cl_sys} is locally exponentially stable with a decay rate $\alpha$ and a domain of attraction $\mathcal E_P$.\frqed
	\end{lemma}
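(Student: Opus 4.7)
The plan is to adapt the standard discrete-time Lyapunov argument from Aitken and Schwartz~\cite{aitken1994exponential} to the local setting, where the hypotheses of~\eqref{eq:CtrlLem2} only hold on the sublevel-type set $\mathcal E_P$. The workflow is: (i) establish forward invariance of $\mathcal E_P$ under the closed-loop dynamics so that the per-step decrement~\eqref{eq:CtrlLem2_b} remains applicable indefinitely; (ii) iterate the decrement to obtain geometric decay of $V$; and (iii) convert that decay into geometric decay of $\|x_t\|$ using the quadratic sandwich~\eqref{eq:CtrlLem2_a}.

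First I would fix $x_{t_0}\in\mathcal E_P$ and argue forward invariance by induction on $t$. The key observation is that $\mathcal E_P$ in~\eqref{eq:ellipsoid_P} is naturally a sublevel set of a quadratic Lyapunov candidate $V(t,x)=x^\top P x$ (the candidate that Theorem~\ref{thm:control_1} produces); under this standard choice, $\mathcal E_P=\{x:V(t,x)\le 1\}$. Assuming $x_s\in\mathcal E_P$ for all $s\in\{t_0,\ldots,t\}$,~\eqref{eq:CtrlLem2_b} is applicable at each such $s$ and yields $V(s+1,x_{s+1})\le \alpha^2 V(s,x_s)\le V(s,x_s)$. Telescoping gives $V(t+1,x_{t+1})\le V(t_0,x_{t_0})\le 1$, which places $x_{t+1}\in\mathcal E_P$ and closes the induction. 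Thus the trajectory never exits $\mathcal E_P$ and all subsequent applications of~\eqref{eq:CtrlLem2_b} are legitimate.

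Next I would iterate the decrement cleanly: for every $k\ge 0$,
\begin{equation*}
V(t_0+k,x_{t_0+k})\le \alpha^{2k}\,V(t_0,x_{t_0}).
\end{equation*}
Then I would sandwich $V$ using~\eqref{eq:CtrlLem2_a}:
\begin{equation*}
\gamma_1\|x_t\|^2\le V(t,x_t)\le \alpha^{2(t-t_0)}V(t_0,x_{t_0})\le \alpha^{2(t-t_0)}\gamma_2\|x_{t_0}\|^2,
\end{equation*}
so that taking square roots produces
\begin{equation*}
\|x_t\|\le \sqrt{\gamma_2/\gamma_1}\,\alpha^{\,t-t_0}\,\|x_{t_0}\|.
\end{equation*}
Setting $C_0:=\sqrt{\gamma_2/\gamma_1}>0$ and noting $\alpha\in(0,1)$ yields precisely the definition of local exponential stability with decay rate $\alpha$ and domain of attraction $\mathcal E_P$, completing the proof.

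The main obstacle is step (i): because the lemma's statement does not explicitly tie the Lyapunov function $V$ to the matrix $P$ defining $\mathcal E_P$, the decrement inequality alone does not a priori guarantee that $x_t$ remains in the region where~\eqref{eq:CtrlLem2} is even assumed to hold. The cleanest resolution, consistent with how the lemma is invoked later in the paper (via Theorem~\ref{thm:control_1} with $V=x^\top P x$), is to take the sublevel sets of $V$ to agree with $\mathcal E_P$, so invariance is immediate from the induction above. If one wanted a fully general statement decoupling $V$ from $P$, the argument would instead require assuming that the unit sublevel set $\{x:V(t,x)\le 1\}$ is contained in $\mathcal E_P$; the rest of the proof then proceeds verbatim.
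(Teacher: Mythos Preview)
The paper does not supply its own proof of this lemma; it is stated as a standard result ``adopted from~\cite{aitken1994exponential}'' and terminated with \texttt{\textbackslash frqed}, with no argument given. Your proposal is the standard discrete-time Lyapunov argument one would expect from that reference and is correct: the induction for forward invariance, the telescoped decrement $V(t,x_t)\le\alpha^{2(t-t_0)}V(t_0,x_{t_0})$, and the sandwich via~\eqref{eq:CtrlLem2_a} to get $\|x_t\|\le\sqrt{\gamma_2/\gamma_1}\,\alpha^{t-t_0}\|x_{t_0}\|$ are exactly the right steps.

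You are also right to flag the subtlety in step~(i): as stated, the lemma does not explicitly identify $V$ with $x^\top P x$, so invariance of $\mathcal E_P$ under the decrement is not automatic from the hypotheses alone. Your resolution---observing that in this paper the lemma is only ever invoked with $V(t,x)=x^\top P x$ (via Theorems~\ref{thm:control_1} and~\ref{thm:constrainedADP}), so that $\mathcal E_P=\{x:V\le 1\}$ and invariance follows directly---is the intended reading and matches how the paper uses the result. There is no gap in your argument given that identification.
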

	
	The following design theorem provides a method to construct a stabilizing policy such that the origin is a locally exponentially stable equilibrium of the closed-loop system and constraint satisfaction is guaranteed within a prescribed ellipsoid of attraction $\mathcal E_P\subset \mathcal X$ without knowing the nonlinearity~$\phi(\cdot)$.
	\begin{theorem}\label{thm:constrainedADP}
	Fix $\alpha\in (0,1)$ and $\liphat$. Suppose $\liphat\ge \lip$, and there exist matrices $S=S^\top\succ 0$, $Y$, and a scalar $\nu>0$ such that the LMI conditions~\eqref{eq:mainLMI_d} and
	\begin{equation}
	\label{eq:mainLMI_c_constrained}
	\begin{bmatrix}
	1 & \xi_i^\top Y \\
	\star & S
	\end{bmatrix} \succeq 0
	\end{equation}
	for every $i=1,\ldots, n_c$.
	Then, the equilibrium $x=0$ of the closed-loop system~\eqref{eq:cl_sys} is locally exponentially stable with a decay rate $\alpha$ and a domain of attraction $\mathcal E_P$ defined in~\eqref{eq:ellipsoid_P}. Furthermore, given that the initial state $x_0\in \mathcal E_P$, then the control actions $u_t$ satisfy the constraints~\eqref{eq:constraints} for all $t\ge 0$.
	\end{theorem}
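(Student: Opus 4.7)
The plan is to decompose the claim into three independent ingredients and then combine them: (i) the Lyapunov decrease inherited from the unconstrained analysis, (ii) forward invariance of the sublevel set $\mathcal E_P$, and (iii) an ellipsoid-in-polytope containment that ensures the constraints~\eqref{eq:constraints} are honored on $\mathcal E_P$. I will reuse the machinery built in Theorems~\ref{thm:control_1}, \ref{thm:control_2}, and \ref{prop:1} almost verbatim for the first two pieces, so the only genuinely new work is handling the constraint LMI~\eqref{eq:mainLMI_c_constrained}.

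First, I would set $P = S^{-1}$, $K_0 = YS^{-1}$, and $V(x) = x^\top P x$. Since the LMI~\eqref{eq:mainLMI_d} holds with $\liphat \ge \lip$, Theorem~\ref{thm:control_2} and Theorem~\ref{prop:1} together guarantee that the matrix inequality~\eqref{eq:thm_control_1} is satisfied with the true Lipschitz constant $\lip$. Pre- and post-multiplying~\eqref{eq:thm_control_1} by $\begin{bmatrix}x^\top & \phi^\top\end{bmatrix}$ and its transpose and using $\|\phi(C_q x)\|\le \lip\|C_q x\|$ (Assumption~\ref{asmp:nonlin_knowledge}) exactly as in the proof of Theorem~\ref{thm:control_1}, I obtain $V(x_{t+1}) \le \alpha^2 V(x_t)$ for every $x_t\in\mathbb R^{n_x}$. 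This inequality, restricted to $x\in\mathcal E_P$, together with the obvious bounds $\lambda_{\min}(P)\|x\|^2 \le V(x)\le \lambda_{\max}(P)\|x\|^2$, verifies the hypotheses of Lemma~\ref{lem:CtrlLem_2_constrained}, so the origin is locally exponentially stable with decay rate $\alpha$.

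Next I would establish that $\mathcal E_P$ is forward invariant: if $V(x_0)\le 1$ then iterating the Lyapunov contraction gives $V(x_t)\le \alpha^{2t} V(x_0) \le 1$, so $x_t\in\mathcal E_P$ for every $t\ge 0$. This is the bridge between stability and constraint enforcement: once invariance is in hand, it suffices to show $\mathcal E_P\subseteq \mathcal X$ and the constraints are automatically honored along the entire trajectory.

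Finally, I would translate~\eqref{eq:mainLMI_c_constrained} into the ellipsoid-containment condition of Lemma~\ref{lem:ellipsoid}. Since $S\succ 0$, taking the Schur complement of $\begin{bmatrix}1 & \xi_i^\top Y \\ Y^\top \xi_i & S\end{bmatrix}\succeq 0$ with respect to the $(2,2)$-block yields $1 - \xi_i^\top Y S^{-1} Y^\top \xi_i \ge 0$. Substituting $Y = K_0 S$ and $S = P^{-1}$ gives $\xi_i^\top K_0 P^{-1} K_0^\top \xi_i \le 1$ for each $i=1,\dots,n_c$, which is exactly~\eqref{eq:ellipsoid_containment}. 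Lemma~\ref{lem:ellipsoid} then delivers $\mathcal E_P\subseteq \mathcal X$. Combined with invariance, for any $x_0\in\mathcal E_P$ we have $x_t\in\mathcal E_P\subseteq \mathcal X$, so $u_t = K_0 x_t$ satisfies $\xi_i^\top u_t\le 1$ for every $i$ and every $t\ge 0$. The main (minor) obstacle is just bookkeeping the variable change between $(S,Y,\nu)$ and $(P,K_0,\nu)$ consistently across the two LMIs; once that is set, the three ingredients snap together.
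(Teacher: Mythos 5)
Your proof is correct and follows essentially the same route as the paper: inherit the Lyapunov decrease from the unconstrained results, take a Schur complement of~\eqref{eq:mainLMI_c_constrained} to obtain~\eqref{eq:ellipsoid_containment}, and invoke Lemma~\ref{lem:ellipsoid} and Lemma~\ref{lem:CtrlLem_2_constrained}. Your version is slightly more careful than the paper's in that you make the chain Theorem~\ref{thm:control_2} $\Rightarrow$ Theorem~\ref{prop:1} $\Rightarrow$ Theorem~\ref{thm:control_1} explicit (the paper cites only Theorem~\ref{thm:control_1}) and you spell out the forward invariance of $\mathcal E_P$, which the paper leaves implicit.
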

	
	\begin{proof}
	From Theorem~\ref{thm:control_1}, we know that~\eqref{eq:CtrlLem2} holds. Taking Schur complements of~\eqref{eq:mainLMI_c_constrained} yields~\eqref{eq:ellipsoid_containment} which, by Lemma~\ref{lem:ellipsoid}, implies that $\mathcal E_P\subset \mathcal X$ and hence, the input constraints are satisfied for all $t\ge 0$ by the closed-loop system with policy $u=K_0x$, because $x_0\in\mathcal E_P$. Thus, all the conditions of Lemma~\ref{lem:CtrlLem_2_constrained} are satisfied, which concludes the proof. \frQED
	\end{proof}%
	\begin{remark}
	Note that the conditions~\eqref{eq:mainLMI_d} and~\eqref{eq:mainLMI_c_constrained} are LMIs in $S$, $Y$, and $\nu$ for a fixed $\liphat$. Therefore one can maximize the volume of $\mathcal E_P$ by solving a constrained convex program with cost function $-\log|S|$ (the log-determinant of $S$) subject to the constraints~\eqref{eq:mainLMI_d} and~\eqref{eq:mainLMI_c_constrained} while line searching for $\alpha$. This will reduce the conservativeness of the domain of attraction.
	\end{remark}

	\subsubsection{Safely initialized input-constrained PI}
	By adopting the work of \cite{abu2005nearly,lin2017optimal,zhang2009neural,modares2013adaptive,7302057} for input-constrained/actuator saturated ADP we choose a cost function of the form
	\begin{equation}
	\label{eq:cost_constrained}
	\mathcal U(x, u) = Q(x) + 2\int_0^u \left(\bar u \tanh^{-1}(\upsilon/\bar u)\right)^\top R \;\mathrm{d}\upsilon,
	\end{equation}
	where $Q(x):\mathbb R^{n_x}\to \mathbb R$ is a positive definite function satisfying $Q(0)=0$ and $R \succ 0$.  
	
	We begin by demonstrating that the constrained policy is an admissible policy on its domain of attraction.
	\begin{theorem}
	Let $\mathcal U$ be defined as in~\eqref{eq:cost_constrained}. Then the initial control policy $u_0=K_0x$ obtained by solving~\eqref{eq:mainLMI_d} and~\eqref{eq:mainLMI_c_constrained} is an admissible control policy on $\mathcal{E}_{P}$.
	\end{theorem}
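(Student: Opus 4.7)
The plan is to verify the three defining properties of an admissible policy in Definition~\ref{defn:admissible_policy} restricted to $\mathcal E_P$: continuity, local stability on $\mathcal E_P$, and finiteness of the associated cost $\mathcal J(x_0,u_0)$ for every $x_0\in\mathcal E_P$. Continuity of $u_0(x)=K_0 x$ is immediate since it is linear. Local exponential stability with domain of attraction $\mathcal E_P$ follows directly from Theorem~\ref{thm:constrainedADP}, which also ensures that, because $V(x)=x^\top P x$ is a Lyapunov function satisfying $\Delta V \le -(1-\alpha^2)V$, the sublevel set $\mathcal E_P=\{x: V(x)\le 1\}$ is forward invariant: for any $x_0\in\mathcal E_P$, the trajectory $\{x_t\}$ remains in $\mathcal E_P$ for all $t\ge 0$.

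The key consequence of invariance I would use next is that the control constraint is respected along the entire trajectory. Since $\mathcal E_P\subset \mathcal X$ by Lemma~\ref{lem:ellipsoid} applied to~\eqref{eq:mainLMI_c_constrained}, every $x_t\in\mathcal E_P$ maps to $u_t = K_0 x_t$ satisfying $\xi_i^\top u_t \le 1$, so the components of $u_t$ lie strictly inside the saturation thresholds $\bar u$ for all $t$. This is crucial because the integral $\int_0^{u_t}\bigl(\bar u\tanh^{-1}(\upsilon/\bar u)\bigr)^\top R\,\mathrm d\upsilon$ in $\mathcal U$ is only well-defined and finite when $|u_t|<\bar u$, and invariance inside $\mathcal E_P$ guarantees this.

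For finiteness of $\mathcal J(x_0,u_0)$, I would exploit exponential convergence. From Theorem~\ref{thm:constrainedADP}, $\|x_t\|\le C_0\alpha^t\|x_0\|$, hence $\|u_t\|=\|K_0 x_t\|\le \|K_0\|C_0\alpha^t\|x_0\|$ decays at the same geometric rate. Because $Q(\cdot)$ is continuous with $Q(0)=0$ and $\{x_t\}$ stays in the compact invariant set $\mathcal E_P$, a local Lipschitz/Taylor bound yields $Q(x_t)\le c_Q\|x_t\|^2$ for some $c_Q>0$ on $\mathcal E_P$. For the integral term, noting that $\tanh^{-1}(\cdot)$ is smooth with $\tanh^{-1}(0)=0$ and derivative $1$ at the origin, one has $|\tanh^{-1}(\upsilon/\bar u)|\le c_T|\upsilon|/\bar u$ uniformly for $|\upsilon|$ in any compact subset of $(-\bar u,\bar u)$, so a quadratic majorant $c_R\|u_t\|^2$ bounds the integral. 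Combining these two bounds yields $\mathcal U(x_t,u_t)\le (c_Q+c_R\|K_0\|^2)C_0^2\alpha^{2t}\|x_0\|^2$, and with the discount factor $\gamma\in(0,1]$, $\mathcal J(x_0,u_0)\le (c_Q+c_R\|K_0\|^2)C_0^2\|x_0\|^2\sum_{t=0}^{\infty}(\gamma\alpha^2)^t<\infty$.

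The main obstacle I anticipate is the bound on the integral term in~\eqref{eq:cost_constrained}, since $\tanh^{-1}$ blows up at the saturation boundary; the argument works only because invariance of $\mathcal E_P$ keeps $u_t$ away from the boundary of the constraint set, forcing $u_t/\bar u$ into a compact subset of $(-1,1)$ where $\tanh^{-1}$ is Lipschitz. Once this uniform bound is in hand, assembling continuity, local exponential stability on $\mathcal E_P$, and summability of the exponentially decaying per-step cost immediately yields admissibility of $u_0=K_0 x$ on $\mathcal E_P$.
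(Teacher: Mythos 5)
Your proof is correct and, interestingly, more rigorous than the paper's own argument. The paper disposes of finiteness by noting that $\mathcal U(x_t,u_t)\to 0$, so the per-step cost is bounded, and then asserts that the (monotone) partial sums of $\sum_t\gamma^t\mathcal U$ are bounded; this step is not actually justified by boundedness of $\mathcal U$ alone when $\gamma=1$ (which the paper allows, since $\gamma\in(0,1]$). Your route instead establishes a \emph{geometric} decay bound $\mathcal U(x_t,u_t)\le (c_Q + c_R\|K_0\|^2)C_0^2\alpha^{2t}\|x_0\|^2$ by combining forward invariance of $\mathcal E_P$, the exponential estimate $\|x_t\|\le C_0\alpha^t\|x_0\|$ from Theorem~\ref{thm:constrainedADP}, and local quadratic bounds on $Q(\cdot)$ and the $\tanh^{-1}$-integral near the origin, which yields a convergent geometric series for any $\gamma\le 1$. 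You also make explicit the constraint-satisfaction step ($\mathcal E_P\subset\mathcal X$ via Lemma~\ref{lem:ellipsoid} and~\eqref{eq:mainLMI_c_constrained}) that the paper leaves implicit.

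One small technical remark: the claim that the integral term in~\eqref{eq:cost_constrained} ``is only well-defined and finite when $|u_t|<\bar u$'' is slightly overstated. At $|u|=\bar u$ the integrand has only a logarithmic singularity, and $\int_0^{\bar u}\bar u\tanh^{-1}(\upsilon/\bar u)\,\mathrm d\upsilon = \bar u^2\ln 2$ is finite. What strict interiority actually buys you is the \emph{quadratic} majorant $c_R\|u_t\|^2$ needed for the geometric-series bound, not mere finiteness of each term. Since $\|u_t\|\to 0$ exponentially, the quadratic majorant holds for all $t$ beyond some finite $T$, and the finitely many initial terms are each finite regardless, so your conclusion stands; it would just be cleaner to phrase the role of invariance as ``keeps $u_t/\bar u$ eventually in a fixed compact subset of $(-1,1)$ where $\tanh^{-1}$ is Lipschitz,'' rather than as a prerequisite for the integral to exist.
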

	\begin{proof}
	By definition $Q(0)=0$. Also, the integrand in~\eqref{eq:cost_constrained} is zero when the upper limit is zero. Therefore, $\mathcal U(0,0)=0$.  For any $x_0\in\mathcal E_P$, $u_0$ is a stabilizing constrained control policy, therefore, $\|x_t\|\to 0$ and $\|u_t\|\to 0$ as $t\to \infty$. Hence, $\mathcal U\to 0$ as $t\to\infty$. The rest of the proof follows identically as in the proof of Theorem~\ref{thm:admissible_1}.\frQED
	\end{proof}	
	Since the control policy is constrained, we can initialize ADP safely using the neural approximator~\eqref{eq:Jk} as discussed in the previous subsection. The policy evaluation step is given by
	\begin{align}
	&\nonumber \omega_{k+1}^\top \big(\psi(x_t)- \gamma\psi(x_{t+1})\big) \\ 
	&\quad = Q(x_t) +  2\int_0^{u_k(x_t)} \left(\bar u \tanh^{-1}(\upsilon/\bar u)\right)^\top R\,\mathrm{d}\upsilon	
	\nonumber
	\\
	\label{eq:constrained_policy_evaluation_nn}
	&\quad = Q(x_t) + 2\bar u u^\top R\tanh^{-1}(u/\bar u) \\
	\nonumber	
	&\qquad\qquad\qquad + \bar u^2 \diag(R)^\top \begin{bmatrix}\ln(1-u_1^2/\bar u^2) \\ \ln(1-u_2^2/\bar u^2) \\ \vdots \\ \ln(1-u_{n_u}^2/\bar u^2)\end{bmatrix},\nonumber
	\end{align}
	where $u_{1},\ u_{2},\ u_{3},\ \cdots, u_{n_u}$  are the individual components of the vector $u$. Subsequently, the policy improvement step is given by
	\begin{equation}
	\label{eq:constrained_policy_improvement_nn}
	u_{k+1} = -\bar u\tanh\left[\frac{\gamma}{2\bar u}R^{-1}B^\top \nabla \psi (x_{t+1})^\top\omega_{k+1}\right],
	\end{equation}
	which satisfies the control constraints, since $\|\tanh(\cdot)\|_\infty \le 1$.
	
	Since the initial control policy is constrained and admissible, one can use~\cite[Theorem 2]{lin2017optimal} to prove convergence of the value function and the control policy to the optimal using the constrained policy iteration steps~\eqref{eq:constrained_policy_evaluation_nn} and~\eqref{eq:constrained_policy_improvement_nn}.
	
	\subsection{Remarks on on-policy VI and Q-learning}
	The value iteration algorithm (see Algorithm~\ref{algo:VI}) does not generally require an admissible control policy in order to converge optimally using data. Although this is true in off-policy implementations (that is, when the updated control policy is not used on-line), in on-policy implementations, a lack of stabilizing initial policies could result in unsafe transient behavior unless the underlying system is open-loop stable, leading to unsafe exploration during the initial data collection phase.
	\begin{algorithm}[!ht]
		\caption{Safely Initialized VI for discrete-time systems}
		\label{algo:VI}
		\small
		\begin{algorithmic}[1]
			\Require Termination condition constant $\epsilon_\mathrm{ac}$
			\Require Historical data $\mathcal D$
			\State Estimate Lipschitz constant $\liphat$ using Algorithm~\ref{algo:KLL}
			\Require Compute stabilizing control gain $K_0$ via SDP~\eqref{eq:mainLMI_d}
			\State Fix safe initial control policy $u_0(x)=K_0x$
			\While {$\left\|\mathcal J_k - \mathcal J_{k-1}\right\|\geq \epsilon_\mathrm{ac}$}
			\State Solve for the value $\mathcal J_{k}(x)$ using
			\[
			\mathcal J_{k+1}(x_t)=\mathcal U(x_t,u_k(x_t))+\gamma \mathcal J_{k}(x_{t+1}).
			\]
			\State Update the control policy $u_{(k+1)}(x)$ using
			\[
			u_{k+1}(x_t)=\argmin_{u(\cdot)}\big(\mathcal U(x_t,u_k(x_t))+\gamma \mathcal J_{k+1}(x_{t+1})\big).
			\]
			\State $k:=k+1$
			\EndWhile
		\end{algorithmic}
	\end{algorithm}
	
	Q-learning is a provably convergent direct optimal
	adaptive control algorithm and model-free reinforcement
	learning technique 
	\cite{Watkins1992,Tsitsiklis1994,mehta2009q,VAMVOUDAKIS201714}. Q-learning can
	be used to find an optimal action-selection policy based on measurements
	of previous state and action observations controlled
	using a sub-optimal policy. In most of the existing work the reward/cost function is manipulated to guarantee correction of the unsafe actions in the learning phase. Our proposed method does not require a corrective modification of the reward/cost function on-line for safety. Instead, historical data and solving SDPs based on Lipschitz estimation is used to generate safe control policies that enables safe data collection during on-policy Q-learning implementation, because the states are guaranteed not to diverge with the initial policy (this divergence could happen if the initial policy was unsafe).

	\section{Numerical Examples}\label{sec:ex}
	\subsection{Nonlinear torsional pendulum}
	We demonstrate our proposed approach using the torsional pendulum which is modeled by discretizing the system
	\begin{subequations}\label{eq:nonlinear_torsional_pendulum}
	\begin{align}
	\dot \theta &= \omega, \\
	J \dot \omega &= u - Mgl\sin\theta - f_d\omega,
	\end{align}
	\end{subequations}
	with mass $M=0.333$~Kg, length $l=0.667$~m, acceleration due to gravity $g=0.981$~m/s$^2$, friction factor $f_d=0.2$, and moment of inertia $J=0.1975$~Kg-m$^2$. With Euler discretization and a sampling time of $\tau=0.01$~s, we get a discrete-time model of the form~\eqref{eq:true_sys} with
	\[
	x = \begin{bmatrix} \theta \\ \omega \end{bmatrix},\;
	A = I + \tau \begin{bmatrix}
	0 & 1 \\ 0 & -f_d
	\end{bmatrix},\; B= \tau\begin{bmatrix}
	0 \\ 1
	\end{bmatrix}, \; G = \tau\begin{bmatrix}
	0 \\ -1
	\end{bmatrix}.
	\] We assume that the nonlinearity $\phi = Mgl\sin\theta/J$ is completely unknown; clearly $\phi(\cdot)$ has a Lipschitz constant $
	\lip = Mgl/J = 11.038$, which is also unknown to us.
	
	In the data collection phase, we initialize the system~\eqref{eq:nonlinear_torsional_pendulum} from ten different initial conditions in the space $[-\pi, \pi]\times [-2, 2]$ and collect data each $0.1$~s, leading to a total dataset of $N=50$~samples. Note that the initialization procedure of~\cite{lin2017optimal} requires 400 data points, which is considerably more than ours, and in that procedure, the original policy in the pre-training phase is not guaranteed to be admissible. Automatic relevance determination reveals that the nonlinearity only acts through the second state, and the argument of the nonlinearity is $q=\theta$. Proceeding as in Algorithm~\ref{algo:KLL}, we perform cross-validation using an Epanechnikov kernel with bandwidth $h_n=0.05$ and choose $\beta=0.01$. This yields the overestimate $\liphat=11.511>\lip$. Using this Lipschitz estimate, we solve~\eqref{eq:mainLMI_d} with $\alpha=0.95$ and $\nu=1$ for an initial value function $x^\top P x$ and control policy estimate $K_0x$.
	
	We construct a $2-11-1$ value function neural approximator with a set of polynomial basis functions
	\begin{align}
	\nonumber \psi(x_1,x_2) &= \left\{
	\frac{x_1^2}{2},
	\frac{x_2^2}{2},
	x_1x_2,
	\frac{x_1^2x_2}{2} ,
	\frac{x_1x_2^2}{2} ,
	\frac{x_1^4}{4}	,
	\frac{x_2^4}{4} ,\right.\\
	\label{eq:example1_basis}
	&\left. \hspace{2em}\!
	\frac{x_1^3}{3} ,
	\frac{x_2^3}{3} ,
	\frac{x_1^2x_2^2}{2} ,
	\frac{x_1^4x_2^4}{4} \right\},
	\end{align}
	where $x_1$, $x_2$ denote the first and second components of $x$, respectively. Our initial weight vector is set to $$\omega_0 = \begin{bmatrix}
	2P_{11} & 2P_{22} & P_{12}+P_{21}& 0 & \cdots & 0
	\end{bmatrix}^\top, $$ where $P_{ij}$ is the $(i,j)$th element of $P$. We fix the learning rate at $\eta=10^{-4}$ and the forgetting factor $\gamma=0.95$.
	
	\subsubsection*{Unconstrained scenario}
	\begin{figure}
		\centering
		\includegraphics[width=0.9\linewidth]{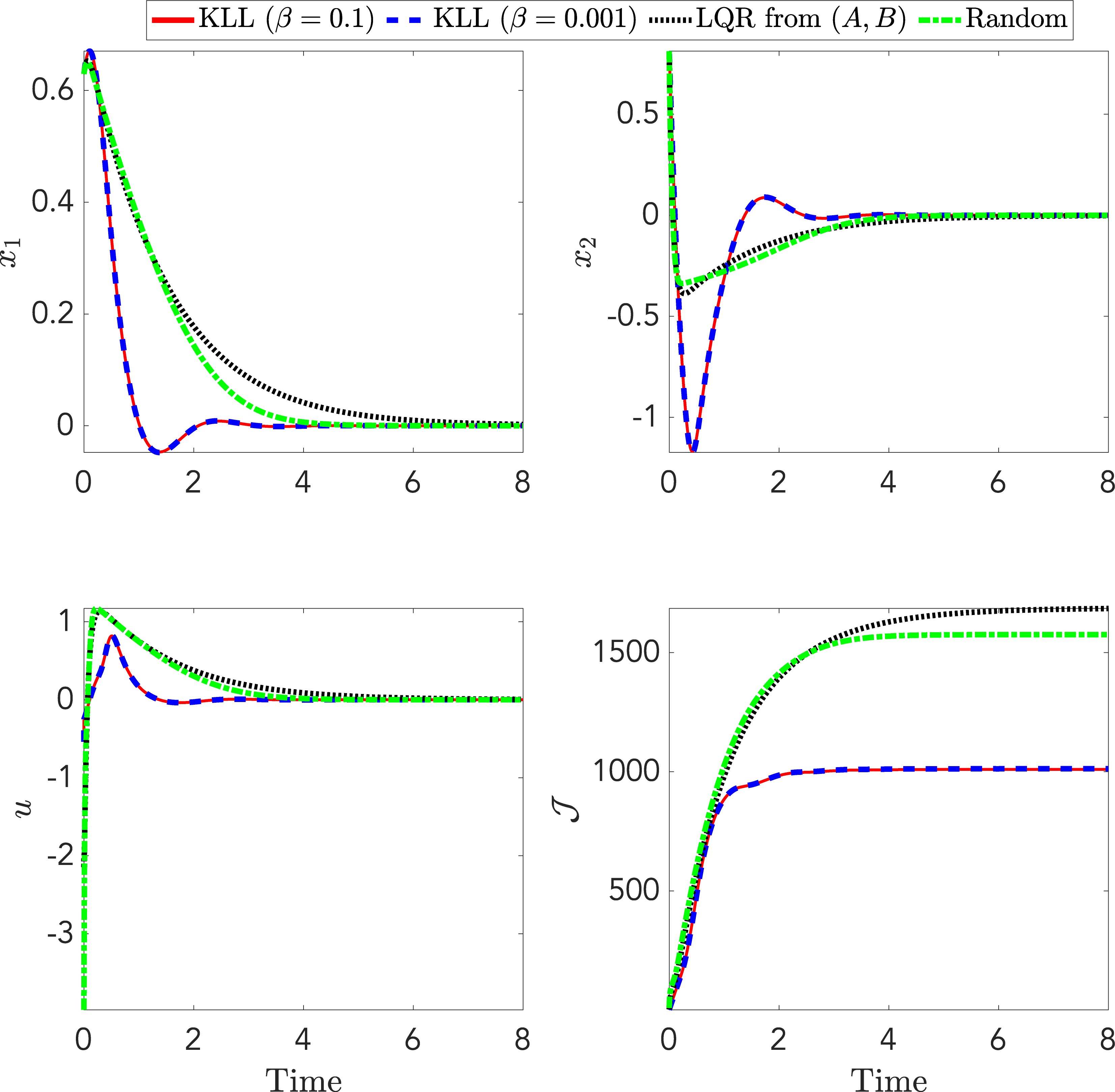}
		\caption{Comparison of states $x_t$, inputs $u_t$, and cost function values for unconstrained ADP with safe initialization for Lipschitz estimates with increasing confidence $\liphat(\beta=0.1)=11.14$ and $\liphat(\beta=0.001)=12.29$. We also compare our work to an LQR controller that is known to work for linear systems.}
		\label{fig:compareunconstrained}
	\end{figure}
	We first test the unconstrained scenario, where the cost function is $\sum \|Qx\|_1 + u^\top R u$, with $Q=I_2$ and $R=0.5$, and compare four initial policies and value functions obtained via: (i) kernelized Lipschitz learning with $\beta=0.1$; (ii) kernelized Lipschitz learning with $\beta=0.001$; (iii) solving an algebraic Riccati equation and ignoring the nonlinearity; and (iv) randomly initializing with small weights from a normal distribution with small variance and zero mean (which is by far the most common initializer). The comparison study results are shown in Figure~\ref{fig:compareunconstrained}. We observe that all of the methods (i)--(iv) listed above work, and result in stabilizing control policies that result in the state of the torsional pendulum to converge to its equilibrium. Interestingly, both the Lipschitz constant estimates result in similar trajectories implying that the SDPs~\eqref{eq:mainLMI_d} are not extremely sensitive to the Lipschitz estimate. However, based on the $\mathcal J$ subplot which shows the variation of $\sum x^\top Q x + u^\top R u$ with time, there is a slight improvement of performance in the $\beta=0.1$ (continuous red line) case compared to the $\beta=10^{-3}$ (dashed blue line) case since the Lipschitz estimate in the former is closer to the true Lipschitz constant. As expected, the cost incurred by the control policy ignoring the nonlinearity (dotted black line) is by far the worst, since the control actions required early on are of larger magnitude and the tracking performance is severely compromised. Randomly selecting weights also results in worse performance than our proposed method, as the cost incurred is increased due to oscillatory behaviour in the states and poor tracking in the initial time frame. Summarily, this experiment demonstrates the effectiveness of the proposed approach and its robustness to Lipschitz estimate conservatism.
	
	In Fig.~\ref{fig:plottorsional_VI}, we demonstrate the on-policy value iteration algorithm with safe initialization. All initial conditions converge to the origin using our proposed approach. In constrast, randomly initializing a policy and value as is typical in on-policy value iteration results in the states initially diverging (not shown in the plot) and poor performance before the rank condition is reached for determining a least-squares solution to update the neural weights.
		\begin{figure}
		\centering
		\includegraphics[width=.9\linewidth]{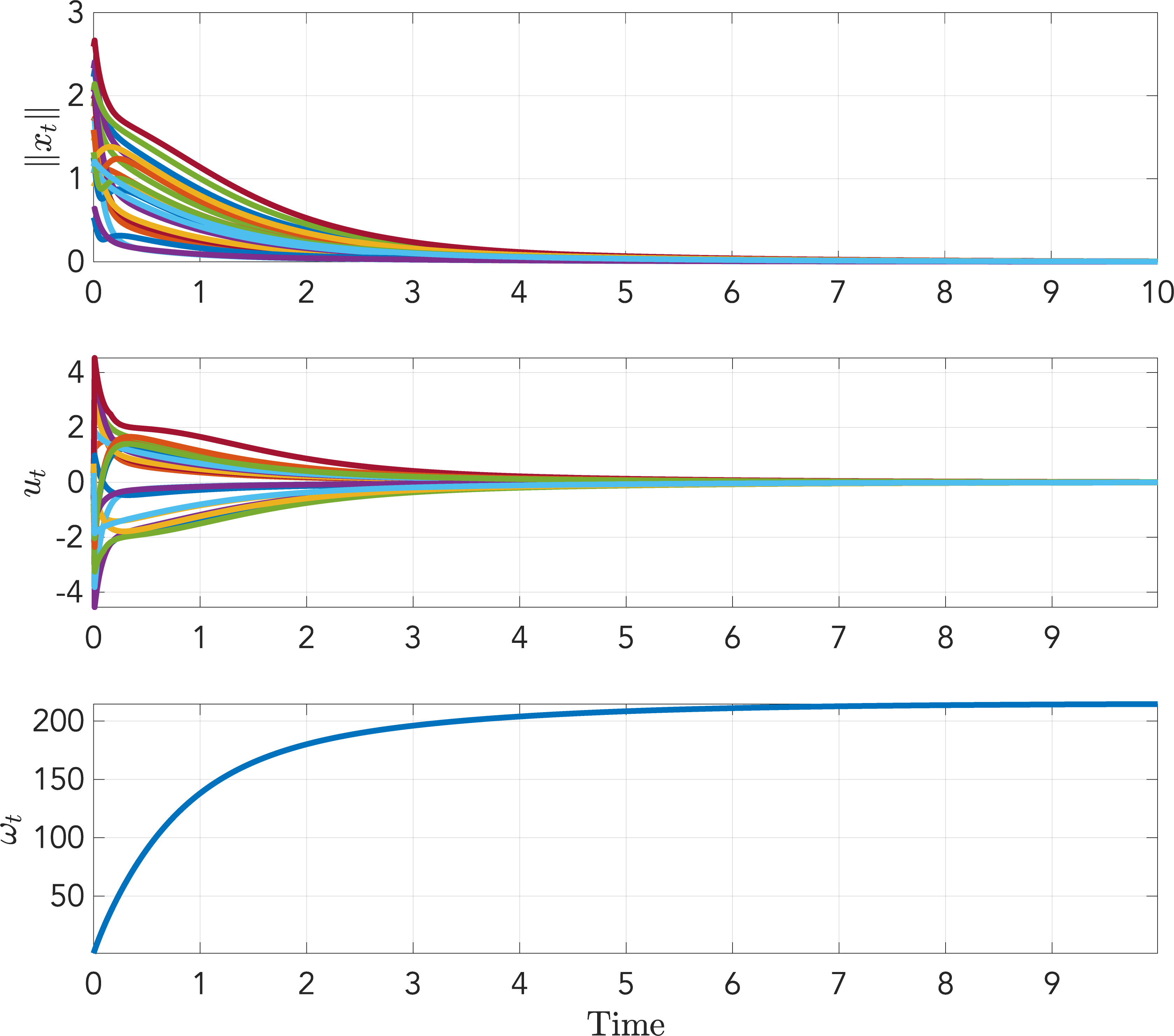}
		\caption{Illustration of constrained-input value-iteration based ADP with safe initialization. The top plot shows the variation of $\|x\|$ over time. The middle plot demonstrates that input constraints are satisfied for all $t$, and the bottom plot demonstrates that the initial control policy was close to the optimal, but learning was necessary to change the weights to the optimal values.}
		\label{fig:plottorsional_VI}
	\end{figure}

	\subsubsection*{Constrained scenario}
	\begin{figure}
		\centering
		\includegraphics[width=.9\linewidth]{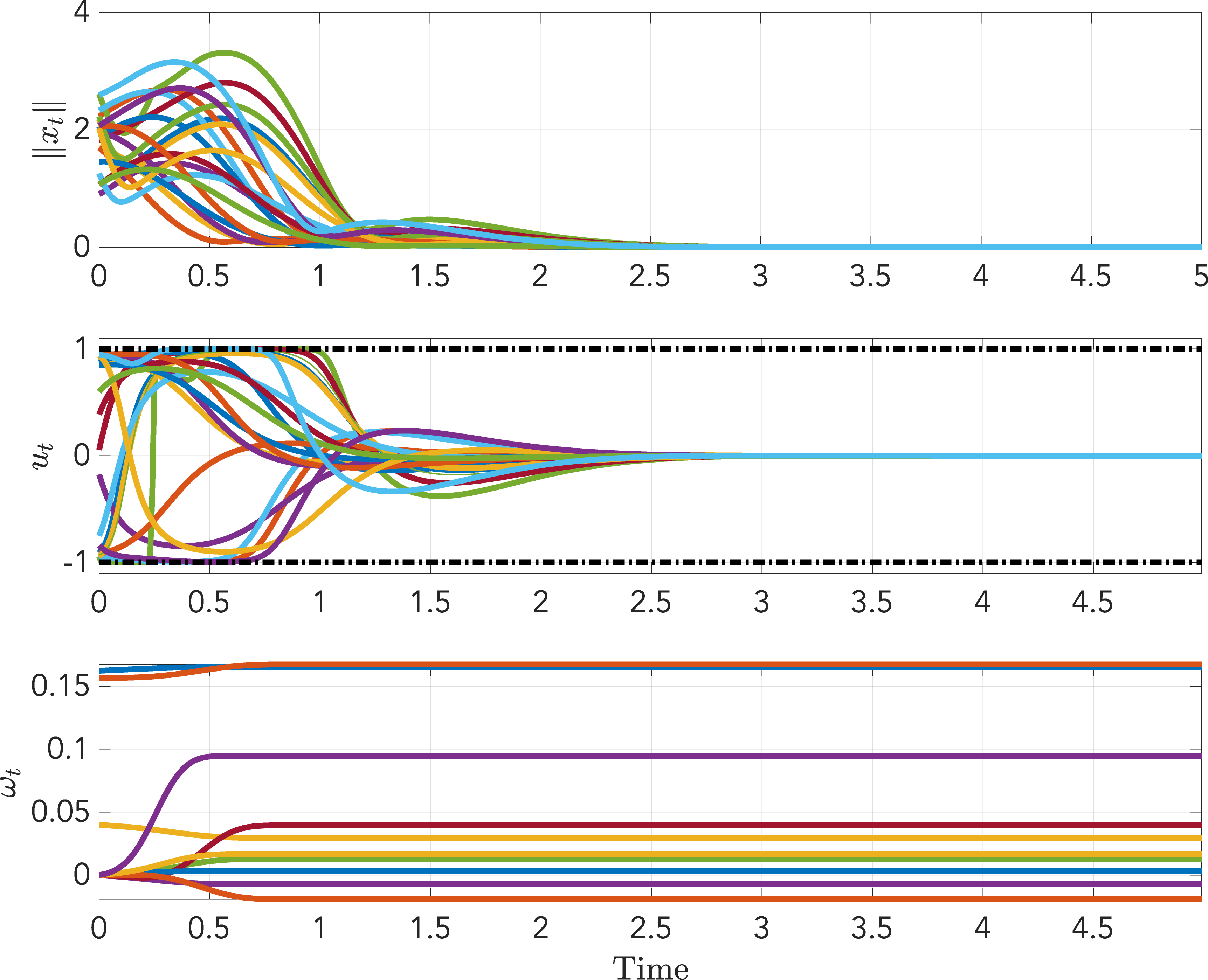}
		\caption{Illustration of constrained-input policy-iteration based ADP with safe initialization. The top plot shows the variation of $\|x\|$ over time. The middle plot demonstrates that input constraints are satisfied for all $t$, and the bottom plot demonstrates that the initial control policy was close to the optimal, but learning was necessary to change the weights to the optimal values.}
		\label{fig:plottorsional}
	\end{figure}
	We also test the scenario where the control actions are constrained by $|u|\le 1$. In this case, we use the cost functional defined in~\eqref{eq:cost_constrained} with $\bar u=1$, $Q=I_2$ and $R=0.5$. We begin by solving~\eqref{eq:mainLMI_d} and~\eqref{eq:mainLMI_c_constrained} with $\nu=1$ and $\alpha=0.95$ to get $P$ and $K_0$, as in the previous subsection. We also select the same basis functions~\eqref{eq:example1_basis}. We randomly initialize (using 20 random initial conditions) the system~\eqref{eq:nonlinear_torsional_pendulum} from within the domain of attraction of the initial control policy, that is, from within the set $\{x^\top P x \le 1\}$. We know from Theorem~\ref{thm:constrainedADP} that this ensures that the initial control policy will satisfy input constraints. Consequently, because the policy improvement step is also guaranteeed to satisfy input constraints and the initial policy is stabilizing, there are no constraint violations, and the initialization is deemed safe. The performance of the proposed algorithm is provided in Figure~\ref{fig:plottorsional}. The convergence of $\|x_t\|$ to zero and the satisfaction of input bounds are illustrated. Finally, we demonstrate the convergence of the neural weights $\omega_t$, noting that learning did occur, that is, the weights were not static throughout the simulation (which would indicate that the initial policy was optimal).

	\subsection{Large randomized linear system}
	In order to study the scalability of the proposed approach on higher state-space dimensions, we randomly generate a 20-state, 10-input linear system of the form $x_{t+1}= Ax_t + Bu_t$, where $B$ is known, and $A$ is unknown. We randomly choose $A_0\neq A$ to be a known matrix such that $(A_0,B)$ is controllable. Clearly, the unmodeled component is $\phi(x) = (A - A_0)x$. We assume we do not know $G$ and $C_q$, so both are set to identity matrices of appropriate dimensions. The initial dataset is generated using a small random perturbation signal on the unknown system and 500 data points are stored, from which we compute a Lipschitz constant for $\phi$ as $\liphat = 0.43$ (the true $\lip=0.34$) and a safe initial policy with $\beta=0.001$; this is illustrated in Figure~\ref{fig:figlargelinear}'s top left subplot; the shaded blue area is the subgraph of $\hat L_n$ within the $\beta$-confident support. The unknown system is simulated from 100 randomly selected initial conditions in $\mathbb R^{20}$ following a normal distribution with variance 2 and zero-mean. The cost function has the form~\eqref{eq:LQR_cost} with $Q=5I_{20}$, $R=2I_{10}$, and $\gamma=0.95$. In all cases, the weights of the $20-210-10$ neural approximator with learning rate $\eta=0.1$ converges within a few seconds, and the system stabilizes, in spite of $A$ being unstable, as shown in~Figure~\ref{fig:figlargelinear}.
	
	\begin{figure}[!ht]
		\centering
		\includegraphics[width=.9\linewidth]{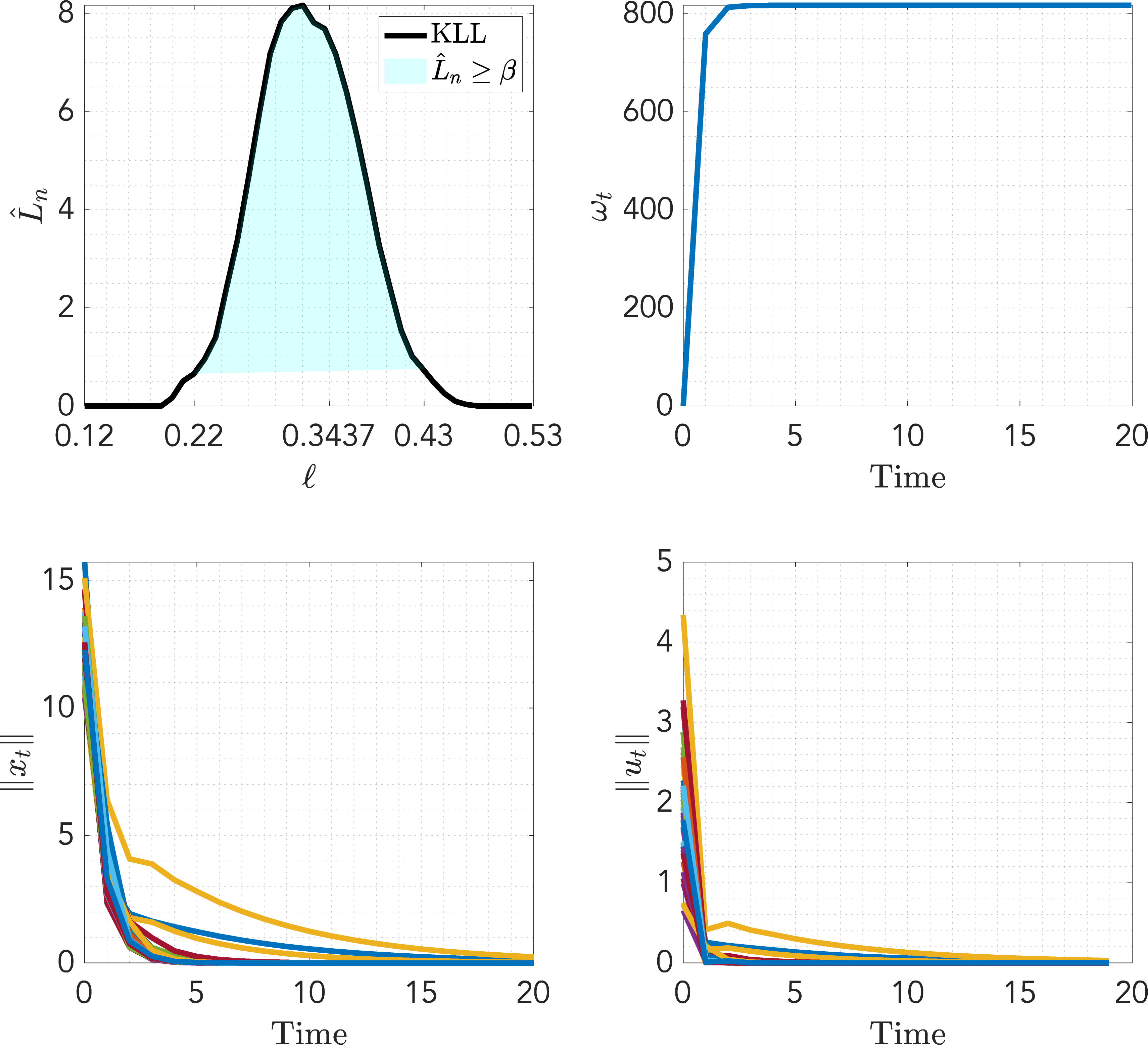}
		\caption{Kernelized Lipschitz estimate, and evolution of states, inputs, and weights for the unknown 20-state, 10-input linear system using on-policy policy iteration.}
		\label{fig:figlargelinear}
	\end{figure}

	\section{Conclusions}\label{sec:conc}
	This work provides a methodology for constructing admissible initial control policies for ADP methods using Lipschitz learning by using kernel density estimation and semi-definite programming. Such admissible controllers enable safe initialization, that is, with constraint satisfaction using only historical data, which is necessary not only in policy iteration methods, but also in value iteration and Q-learning for safely obtaining initial data on-line for on-policy learning when the underlying system is not open-loop stable. Simulations on a discretized torsional pendulum model and a high-dimensional linear system are provided to show the efficiency of our approach. Future research efforts will focus on more general costs and uncertain nonlinear safety constraints while ensuring feasibility with a high probability  in terms of regret.
		 
	\section*{Acknowledgments}
	We would like to thank Drs. Mouhacine Benosman and Piyush Grover at Mitsubishi Electric Research Laboratories, Cambridge, MA, USA, for their time and helpful insights. Kyriakos Vamvoudakis was supported in part by NSF under grant Nos. CPS-1851588 and S\&AS-1849198.

	\balance
	\setstretch{0.98}
	\bibliographystyle{IEEEtran}
	\bibliography{IEEEabrv,\jobname}%
	
\end{document}